\def\ps@pprintTitle{%
  \let\@oddhead\@empty
  \let\@evenhead\@empty
  \let\@oddfoot\@empty
  \let\@evenfoot\@oddfoot
}
\newcommand{\dd}{ \text{d}} 	
\newtheorem{theo}{Theorem}[section]
\newtheorem{definition}{Definition}[section]
\newtheorem{prop}{Proposition}[section]
\newtheorem{lem}{Lemma}[section]
\begin{document}

\begin{frontmatter}

\title{Dark Markets with Multiple Assets: Segmentation, Asymptotic Stability, and Equilibrium Prices }

\author[udes]{Alain B\'{e}langer }
\ead{alain.a.belanger@usherbrooke.ca}
\author[uotwam]{Ndoun\'{e} Ndoun\'{e}}
\ead{nndoune@uottawa.ca}
\author[uotwa]{Roland Pongou}
\ead{rpongou@uottawa.ca; rpongou@hsph.harvard.edu}

\address[udes]{\'{E}cole de Gestion, Universit\'{e} de Sherbrooke, Qu\'{e}bec, J1K2R1, Canada}
\address[uotwam]{Department of Mathematics and Statistics, University of Ottawa, K1N 6N5, Canada}
\address[uotwa]{Department of Economic Science, University of Ottawa, K1N 6N5, Canada; Harvard T.H. Chan School of Public Health,
Boston, MA 02115, US}



\begin{abstract}
We study a generalization of the model of a dark market due to Duffie-G\^{a}rleanu-Pedersen \cite{Duffie2005}. Our market is segmented and involves multiple assets. We show that this market has a unique asymptotically stable equilibrium. In order to establish this result, we use a novel approach inspired by a theory due to McKenzie and Hawkins-Simon. Moreover, we obtain a closed form solution for the price of each asset at which investors trade at equilibrium. We conduct a comparative statics analysis which shows, among other sensitivities, how equilibrium prices respond to the level of interactions between investors.
\end{abstract}

\begin{keyword}
Dark markets \sep multiple assets \sep segmentation \sep steady state \sep asymptotic stability \sep financial market structure \sep asset pricing.

\JEL C30 \sep C62 \sep G10

\end{keyword}

\end{frontmatter}


\section{Introduction }

 An over-the-counter (OTC) market is a decentralized market, without a physical exchange place, where investors trade privately. Trades may take place through telephone or online, and transactions are concluded bilaterally between any two parties. Some OTC markets allow market-makers and traders.
As mentioned in Duffie, G\^arleanu and Pederson \cite{Duffie2007}, several types of assets, such as mortgage-backed securities, swaps and many other types of derivatives, emerging-market debt, corporate bonds, government bonds, bank loans, private equity, and real estate, are traded in OTC markets. A trade between a seller and a buyer is executed when they agree on the price. In general, OTC markets are less transparent than stock markets, which is why they are also called $dark$ $markets$  (see \cite{Duffie2012}).


In this paper, we study an extension of the OTC market that is segmented and involves multiple assets.
 An OTC market with $K$ assets is said to be $partially$ $segmented$ (or simply $segmented$) if buyers have to decide, prior to entering the market, which one of the $K$ assets they want to buy. Each agent  holds at most one unit of any asset $i$ and cannot short-sell. The partially segmented OTC market models were introduced by Vayanos and Wang \cite{Vayanos2007} for the case of two assets with different liquidity assumptions. Another model of a segmented OTC market is studied by Weill \cite{Weill2008}. These two models appear as the first extensions of the Duffie-G\^arleanu-Pederson OTC model (see \cite{Duffie2005} and \cite{Duffie2007}).
In a partially segmented market, each buyer looks for a specific asset, then searches for a potential seller of this asset. \newline For example, if an investor wants a particular corporate bond, he enters the market and looks for a seller of the corporate bond no matter if, at this moment, another similar corporate bond may be more liquid than the first one. This reflects the opacity of information; the prices are unpublished and the other agents in the market are unaware of the price at which the asset is traded.
 A market can be segmented in order to reduce unfair competition between investors. The segmentation of a market allows each investor to get a private monopoly of a given asset in order to increase the price from the socially optimal price and therefore sell at a higher price.

 Segmented OTC markets with multiple assets are highly prevalent in real life, but they have not been sufficiently studied.
Yet understanding the functioning of these markets has practical implications, as investors seek investment strategies that employ sophisticated asset allocations and in-depth research which aims at supplying specific results
adapted to their needs. This is a great motivation among so many others to study
this class of markets. In this paper, our goal is threefold. $First$, we study the asymptotic stability of these markets, which is important to understand their long-run predictions. $Second$, we derive a closed form solution for the price of each
asset at which investors trade at equilibrium. $Third$, we conduct a comparative statics analysis of the price of each asset, and characterize the cross elasticity of the demand.

In our model, each investor's liquidity is characterized  by two states, namely a ``high" and a ``low'' state. Also each investor either owns one unit of an asset $i$ or does not own it.
A transaction occurs when a high type investor who does not own a specific asset meets a seller of this asset, that is, a low liquidity type investor owning the asset (see \cite{Beland2016}).
  We assume that the assets are differentiated by their liquidity and intrinsic value, so that we can compare the preferences of investors in relation to each asset and liquidity. As in Duffie, G\^arleanu and Pederson \cite{Duffie2005}, we assume that each agent's utility function is linear and depends on his type, his wealth and time. An asset may be hit by a valuation shock, with these shocks being independent across assets.

 Our first result is to show that any partially segmented OTC market with multiple assets has a unique asymptotically stable equilibrium (Theorem 3.1). This finding is essential for the understanding of the long-run behavior of this class of markets. Indeed, it implies that any market state is transitory, unless it coincides with the equilibrium state. This also means that any market state that starts near the equilibrium moves closer to it as time elapses, and any displaced motion gets back to the equilibrium. We should also note that, in addition to the substantive finding, we make a methodological contribution in the sense that we rely on new arguments to establish our result. In fact, due to the existence of multiple assets in the market, the method used in the pioneering work of Duffie, G\^arleanu and Pederson \cite{Duffie2005} cannot be extended to our environment. Our argument uses a theory developed by Mckenzie \cite{Mckenzie2009} and Hawkins and Simon \cite{Garleanu1949} and exploit the properties of Hurwitz determinants. Our methodology also allows us to derive the equilibrium timing of the seller, which is the expected number of days that an asset stays in the market before it is sold \footnote{\textit{((In the case of one asset, \cite{Duffie2007} show that the equilibrium price can be obtained as the subgame perfect Nash equilibrium of a Rubinstein-type alternating-offers game.))}}.

 Our second result gives a closed-form solution of the price at which each asset is traded at equilibrium (Theorem 4.1). We note that the price of each asset depends on its exogenous characteristics and on the characteristics of other assets. It also depends on the level of interactions between agents. More formally, the price of a given asset can be expressed as a quotient of polynomial functions of the exogenous parameters of the model and the unique steady state equilibrium. We provide an example that shows how prices are determined. Importantly, it should be noted that because these prices are equilibrium prices, small shocks to their exogenous determinants can temporarily move them, but they will ultimately return to their equilibrium level.

 We are also interested in studying the rate of change of the equilibrium prices when some exogenous parameters of the model grow simultaneously to infinity. We find that the limit prices, taken as the asymptotic limit of exogenous parameters, may not exist (Proposition 6.1). Such a situation could not happen in Duffie, G\^arleanu and Pederson \cite{Duffie2007}, because the parameters of the segmentation were not considered in the seminal model.
 One important result shows that there are two kinds of exogenous parameters, the first being the type of parameters whose variation maintains the existence of equilibrium prices and the second being the type of parameters that express the segmentation which break it off. By changing randomly these parameters, the equilibrium prices can be broken when the parameters grow to infinity. It is noteworthy that the corresponding result shows that the variation of the intensities of meetings maintains the equilibrium prices (Proposition 6.1).

  Our last result characterizes the cross elasticity of the demand, for equilibrium prices, that is, the responsiveness of the variation of an exogenous parameter for a given asset to change in the price of other assets (Proposition 6.2). Indeed, the cross elasticity analysis shows that, depending on the values of the other parameters, the price of an asset increases or decreases in the frequency at which agents meet for the exchange of that asset. Surprisingly, there exists, for each given asset, a value of its frequency of meeting which makes its price constant. We provide a simple example that illustrates our theoretical findings.
Because these findings are specific to markets involving multiple assets, they constitute a useful addition to those found in the classical papers of Duffie, G\^arleanu and Pederson (\cite{Duffie2005},\cite{Duffie2007}) and Duffie \cite{Duffie2012}.

The rest of the paper is organized as follows. In section 2, we present the model of a segmented OTC market with multiple assets. In section 3, we show that any segmented OTC market with multiple assets has a unique steady state which is asymptotically stable. In section 4, we obtain a closed form solution for the equilibrium price of each asset. We present a numerical illustration of our analysis in section 5.
In section 6, we conduct a comparative statics analysis. Finally, section 7 concludes with a discussion of directions for further research. For clarity of exposition, we collect all the proofs in the appendix.

\section{A Model of a Segmented Dark Market with Multiple Assets}

In their paper \cite{Duffie2005},  Duffie, G\^{a}rleanu and Pedersen present their model of OTC market with one traded asset as a dynamical system of four differential equations with two constraints which can be reduced to a system of two differential equations with two constraints. In this section, we briefly present our model, inspired by their model, with $K$ traded assets, $K\geq 1,$ and segmentation.
Following Duffie \cite{Duffie2012}, let us consider a probability space $(\Omega,\mathcal{F},\mathrm{P})$ and $\{\mathcal{F}_{t},\: t\geq 0\}$ a filtration, that is the time $t$ information set, which is the $\sigma$-algebra of events that are known to the market participants at time $t$. The filtration models the evolution of information as it becomes available over time. Our OTC financial market model is based on the observation of a continuous time stochastic vector process expressing asset price processes. The components of the vector process could include for instance, security prices, interest rates, indicators for certain political events, insurance claims, and trade balance, to name just a fiew. For more details we refer the readers to \cite{Platen2006}.
In our market, there are two kinds of investors: buyers and sellers, who consume a single nonstorable
good that is used as a numeraire. We do not consider OTC market models with market-makers in this study.

The set of available assets will be denoted $\mathcal{I} = \{1,...,K\}$. Investors can hold at most one unit of any asset $i\in\mathcal{I}$ and cannot short-sell. Time is treated continuously and runs forever. The market is populated by a continuum of investors. At each time, an investor is characterized by whether he owns the $i$-th asset or not, and by an intrinsic type which is either a 'high' or a 'low' liquidity state. Our interpretation of liquidity state is the same as in \cite{Duffie2005}. For example, a low-type investor who owns an asset may have a need for cash and thus wants to liquidate his position. A high-type investor who does not own an asset may want to buy the asset if he has enough cash. Through time, investors' ownerships will switch randomly because of meetings leading to trades and the investor's intrinsic type will change independently  via an autonomous movement which can be considered as an idiosyncratic shock. This dynamics of an investor's type change is modeled by a (non-homogeneous) continuous-time Markov chain $Z(t)$ on the finite set of states $E$. The  state of  an investor is given by an element of $E = \{(l,n),(hi,o),(hi,n),(li,o), \, i\in\mathcal{I}\}$, where the first letter designates the investor's intrinsic liquidity state and the second designates whether the investor owns the asset $i$ or not.
If an investor initially does not own any asset and is a low-type, the switching intensity of becoming a high-type is denoted $\widetilde{\gamma}_{ui}$ as it depends on the asset type. If he initially does not own any asset but is a high-type, his switching intensity of becoming a low-type is denoted $\widetilde{\gamma}_{di}$. However, if an investor initially owns the specific asset $i$ and is a high-type, the switching intensity of becoming a low-type is denoted by $\gamma_{di}$. Finally, if he initially owns a specific asset $i$ but is a low-type, the switching intensity of becoming a high-type is $\gamma_{ui}$. We make the liquidity switches depend on the asset because these assets could have different  purchase prices and  dividend flows.

Let $\delta_{hi}> 0$ be the dividend flow from asset $i.$
 A low-type investor, who owns the asset $i$, has a holding cost $\delta_{di},$ with $\delta_{hi}> \delta_{di}> 0$. If $\theta_{i}(t)$ denote the ownership process for asset $i$, i.e. $\theta_{i}(t)=1$ if $Z(t)\in \{
(hi,o),(li,o)\}$ and $\theta_{i}(t)=0$ if $Z(t)\in \{
(hi,n),(l,n)\}$.


We define the asset value process as follows:
\begin{align}
  d A(t) & \triangleq  \sum_{i\in\mathcal{I}}\left[\theta_i(t)\left(\delta_{hi}-\delta_{di}\mathbbm{1}_{\{Z(t)=(li,o)\}}\right)d t - P_i(t) d \theta_i(t) \right]
\end{align}
where $P_i(t)$ is the price of the $i$th asset. This process is part of the wealth process which will be used in the price derivation.

In addition, investors meet each other randomly at a Poisson rate $\lambda_i$, but an exchange of the asset occurs only if an investor of type $(li,o)$ meets an investor of type $(hi,n)$.
One should notice that, without changes of positions, the system would stop after a finite time and the market would become inefficient.
 At any given time $t$, let $\mu_t(z)$ denote the proportion of investors in state $z \in E$. So, for each $t \geq 0$, $\mu_t$ is a probability law on $E$.

Let $m_i$ denote the proportion of asset $i$, with $i \in \mathcal{I}$. The dynamical system of investors' type proportion measure $\mu_t(z)$ for each $z\in E$,  consists of the system of equations:
\begin{eqnarray}
	\dot{\mu}_t(hi,n) &=& -\lambda_i \mu_t(hi,n)\mu_t(li,o) + \widetilde{\gamma}_{ui} \mu_t(l,n) - \widetilde{\gamma}_{di}\mu_t(hi,n), \ \forall i\in\mathcal{I}\label{peq1}\\
	\dot{\mu}_t(l,n) &=&  \sum_{i\in\mathcal{I}} \lambda_i \mu_t(hi,n) \mu_t(li,o)- \sum_{i\in\mathcal{I}}\widetilde{\gamma}_{ui} \mu_t(l,n) + \sum_{i\in\mathcal{I}}\widetilde{\gamma}_{di}\mu_t(hi,n)\label{peq2}\\
	\dot{\mu}_t(hi,o) &=& \lambda_i \mu_t(hi,n)\mu_t(li,o) + \gamma_{ui}\mu_t(li,o) - \gamma_{di}\mu_t(hi,o), \ \forall i\in\mathcal{I}\label{peq3}\\
	\dot{\mu}_t(li,o) &=& -\lambda_i \mu_t(hi,n)\mu_t(li,o) - \gamma_{ui} \mu_t(li,o) + \gamma_{di}\mu_t(hi,o), \ \forall i\in\mathcal{I}\label{peq4}
\end{eqnarray}
with the $K+1$ constraints

\begin{align*}
\mu_t(hi,o) + \mu_t(li,o) &= m_i, \ \forall i\in\mathcal{I}\\
\sum_{i\in\mathcal{I}}m_i  + \sum_{i\in\mathcal{I}} \mu_t(hi,n)  + \mu_t(l,n) &= 1
. \end{align*}

Following \cite{Beland2016}, the above dynamical system can be reduced to the following equivalent system of $2K$ equations:
\begin{align}\label{eq:pmasterSystem}
\begin{split}
	\dot{\mu}_t(hi,n) &= -\lambda_i \mu_t(hi,n)\mu_t(li,o) + \widetilde{\gamma}_{ui} \mu_t(l,n) - \widetilde{\gamma}_{di}\mu_t(hi,n), \ \forall i\in\mathcal{I}\\
	\dot{\mu}_t(li,o) &=  -\lambda_i \mu_t(hi,n)\mu_t(li,o) - \gamma_{ui}\mu_t(li,o) + \gamma_{di}\mu_t(hi,o), \ \forall i\in\mathcal{I}
\end{split}
\end{align}
with the $K+1$ constraints
\begin{align}
	\mu_t(hi,o) + \mu_t(li,o) &= m_i, \ \forall i\in\mathcal{I}\label{pconstraint1}\\
	\sum_{i\in\mathcal{I}} m_i + \sum_{i\in\mathcal{I}} \mu_t(hi,n)  + \mu_t(l,n) &= 1\label{pconstraint2}
\end{align}


\section{The Stability of Dark Markets}
\subsection{ Definitions and Concepts}

 We recall that a steady state of a dynamical system is a state at which the behavior of the system is unchanging in time. A steady state is $stable$ when the system always returns to the steady state after small disturbances. If for all initial values, the nearby integral curves all converge towards a steady state solution as $t$ increases, then the steady state is said to be $asymptotically$ $stable$. If the system moves away from the steady state after small perturbations, then the steady state is unstable. The notion of stability is very important to analyze the behavior of dynamical systems. In control theory, it means that if for any finite signal, the system produces a finite output signal, then the system is stable.
For an OTC market, the stability can roughly be understood as a state where, despite small price and volatility changes of assets, the asset
prices are not going far away from the equilibrium prices.
 The fluctuations of prices do not affect roughly the dynamic of markets.
 The asymptotic stability appears as a powerful tool to predict dark markets.
  More precisely it helps to understand that, despite all fluctuations, in the long run, market prices will remain close to steady state prices. More rigorously, let us consider an autonomous differential system $\dot{x}(t)=F(x(t))$  with $F:[0,\infty )\times D\longrightarrow \mathbb{R}^{n}$ a function piecewise continuous in $t$ and locally Lypschitz in $x$, where $D$ is a domain containing the solution $x(t)$.

\begin{definition}
\begin{itemize}

 \item
 A solution $x^{0}$ of the above system is said to be stable if given any  $\epsilon>0$ and any $t_{0}\geq 0$, there exists a $\delta=\delta(\epsilon,t_0)>0$ such that $||x(t_0)-x^{0}(t_0)||<\delta$ implies $||x(t)-x^{0}(t)||< \epsilon,$ for all $t\geq t_{0}\geq 0$, for any solution $x(t).$

 \item

 A solution $x^{0}$ of the above system is said to be asymptotically stable if it is stable and for any $t_{0}\geq 0$, there exists a constant $c=c(t_0)>0$ such that whenever $||x(t_0)-x^{0}(t_0)||<c,$ we have  $\lim\limits_{t \rightarrow \infty} ||x(t)-x^{0}(t)||=0.$

\end{itemize}
\end{definition}

 Given an autonomous differential system $\dot{x}=F(x),$ where $F$ is a twice continuously differentiable function, using Taylor's theorem, we can approximate the model at the steady state with a linear model. A very standard result in dynamical systems states that the original system is asymptotically stable if all the eigenvalues of the approximating linear system have negative real parts (see \cite{Braun1993} p. 386).
Appendix \textbf{A} gives more details and a rigorous treatment of these notions.

In order to study the asymptotic stability of OTC market models with several assets, we need also to recall some concepts related to diagonally dominant matrices. A good presentation of this theory can be found, for instance, in the book \textit{Mathematical Economics} by A. Takayama \cite{Takayama1985}.

Let $A=(a_{ij})$ be an $n\times n$ complex matrix. We say that the matrix $A$ is diagonally dominant (abbreviated d.d.), if there exist positive numbers $d_1,d_2,...,d_n$ satisfying the inequality  $d_{i}|a_{ii}|>\sum \limits_{\underset{j \neq i}{j=1}}^n d_{j}|a_{ij}|$. Technically, this is the definition of row dominance but we could work equivalently with the transpose notion of column dominance. In several works, this is called the GDD (generalized diagonal dominant) matrices and diagonal dominance sometimes means the following stricter property due to Hadamard: for all $i=1,..,n$  $|a_{ii}|>\sum \limits_{\underset{j \neq i}{j=1}}^n |a_{ij}|$. We will call this property strictly d.d. Obviously, $A$ is d.d. if and only if $DA$ is strictly d.d. with $D=diag(d_1,d_2,...,d_n)$. Hadamard shows that strictly d.d. matrices are non-singular. McKenzie provides the proof of its generalization to d.d. matrices. On the other hand, Hadamard's theorem is a direct consequence of Gerschgorin's disks theorem (see, for instance, Varga \cite{Varga2011}.)

\subsection{ Main Result}
We now present the main result of this section in the theorem below.

\begin{theo}
Every partially segmented market model with several assets has a unique steady state which is asymptotically stable.

\end{theo}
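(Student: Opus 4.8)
The plan is to reduce the problem to the zeros and the Jacobian spectrum of the $2K$-dimensional field in \eqref{eq:pmasterSystem}. Using the constraints \eqref{pconstraint1}--\eqref{pconstraint2} I would eliminate $\mu_t(hi,o)=m_i-\mu_t(li,o)$ and $\mu_t(l,n)=1-\sum_j m_j-\sum_j\mu_t(hj,n)$, so that a steady state is an interior zero of a map in $x_i:=\mu(hi,n)$ and $y_i:=\mu(li,o)$. Rather than attack the coupled quadratic system directly, I would exploit its monotone structure. Setting $\dot\mu(li,o)=0$ gives $y_i=\gamma_{di}m_i/(\lambda_i x_i+\gamma_{ui}+\gamma_{di})$, strictly decreasing in $x_i$; substituting into $\dot\mu(hi,n)=0$ yields $\lambda_i x_i y_i+\widetilde{\gamma}_{di}x_i=\widetilde{\gamma}_{ui}\,\mu(l,n)$, whose left side is a strictly increasing bijection of $x_i$ on $[0,\infty)$ vanishing at $0$. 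Hence each $x_i$ is a strictly increasing function $x_i(S)$ of the single scalar $S:=\mu(l,n)$, and the residual identity $S=1-\sum_j m_j-\sum_j x_j(S)$ becomes the scalar equation $S+\sum_j x_j(S)=1-\sum_j m_j$, whose left side is continuous, strictly increasing, zero at $S=0$, and unbounded. The intermediate value theorem then delivers a unique admissible $S^\star\in(0,1-\sum_j m_j)$, and unwinding the substitutions gives a unique steady state, which is interior (all coordinates strictly positive).

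For asymptotic stability I would invoke the linearization criterion recalled in Section~3.1 (see \cite{Braun1993}) and show that the Jacobian $J$ at the steady state is Hurwitz. A direct computation reveals a transparent structure: in the ordering $(x_1,y_1,\dots,x_K,y_K)$ the blocks coupling $y$ to $x$, $x$ to $y$, and $y$ to $y$ are all diagonal, while the $x$-to-$x$ block equals $-\mathrm{diag}(\lambda_i y_i+\widetilde{\gamma}_{di})-\hat u\,\hat{\mathbf 1}^{T}$, a diagonal matrix plus a rank-one term, where $\hat u$ carries the intensities $\widetilde{\gamma}_{ui}$ in the $x$-slots and $\hat{\mathbf 1}$ is the indicator of the $x$-slots, this coupling arising from the single shared non-owner low state $(l,n)$. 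Each $2\times 2$ diagonal block $\bigl(\begin{smallmatrix}-\alpha_i & -\lambda_i x_i\\ -\lambda_i y_i & -\beta_i\end{smallmatrix}\bigr)$, with $\alpha_i=\lambda_i y_i+\widetilde{\gamma}_{di}$ and $\beta_i=\lambda_i x_i+\gamma_{ui}+\gamma_{di}$, has negative trace and positive determinant $\alpha_i\beta_i-\lambda_i^2 x_i y_i>0$ (a direct expansion using the positivity of the interior steady state), so by Routh--Hurwitz it is individually stable. The main obstacle is exactly this rank-one coupling: since $(l,n)$ feeds every $\mu(hi,n)$-equation, each intensity $\widetilde{\gamma}_{ui}$ appears in all $K-1$ off-diagonal $x$-columns, and $J$ fails to be diagonally dominant once $K>1$ with large $\widetilde{\gamma}_{ui}$, so Gershgorin's disks alone cannot confine the spectrum to the left half-plane.

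To overcome this I would treat $J=J_0-\hat u\,\hat{\mathbf 1}^{T}$ as a rank-one perturbation of the block-diagonal, already stable matrix $J_0$, and factor the characteristic polynomial with the matrix determinant lemma:
\begin{equation*}
\det(sI-J)=\det(sI-J_0)\bigl(1+\hat{\mathbf 1}^{T}(sI-J_0)^{-1}\hat u\bigr)=\det(sI-J_0)\,W(s),
\end{equation*}
where $W(s)=1+\sum_i\widetilde{\gamma}_{ui}(s+\beta_i)\big/\bigl[(s+\alpha_i)(s+\beta_i)-\lambda_i^2 x_i y_i\bigr]$. Since $\det(sI-J_0)$ has all roots in the open left half-plane, it suffices to prove $W(s)\neq 0$ for $\mathrm{Re}\,s\geq 0$. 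Here I would show each summand $f_i(s)=(s+\beta_i)/[(s+\alpha_i)(s+\beta_i)-\lambda_i^2 x_i y_i]$ is a positive-real function: it is analytic on $\mathrm{Re}\,s\geq 0$ (its poles are the left-half-plane block eigenvalues), it tends to $0$ at infinity, and a short computation gives $\mathrm{Re}\,f_i(i\omega)$ equal to $\beta_i(\alpha_i\beta_i-\lambda_i^2 x_i y_i)+\alpha_i\omega^2$ divided by $\lvert(i\omega+\alpha_i)(i\omega+\beta_i)-\lambda_i^2 x_i y_i\rvert^2\geq 0$; the minimum principle for the harmonic function $\mathrm{Re}\,f_i$ then forces $\mathrm{Re}\,f_i(s)\geq 0$ throughout the closed right half-plane. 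Consequently $\mathrm{Re}\,W(s)\geq 1>0$ there, so $J$ has no eigenvalue with nonnegative real part and is Hurwitz, whence the unique steady state is asymptotically stable. I expect the positive-realness of these resolvent entries to be the crux, as it is the structural fact that replaces diagonal dominance when $K>1$; I would double-check the behavior at the boundary and at infinity needed for the minimum-principle step, or verify positive-realness through the standard rational-function criterion to keep the argument self-contained.
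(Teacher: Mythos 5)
Your proposal is correct, but it takes a genuinely different route from the paper's. For existence and uniqueness the paper simply cites \cite{Beland2016}, whereas you supply a self-contained monotone reduction to the scalar equation $S+\sum_j x_j(S)=1-\sum_j m_j$; this works (one should also record that $y_i=\gamma_{di}m_i/(\lambda_i x_i+\gamma_{ui}+\gamma_{di})<m_i$, so the fixed point respects the constraint $\mu(hi,o)\geq 0$). For stability the paper argues via generalized diagonal dominance: it seeks positive weights $d_1,\dots,d_{2K}$ making $DJ$ strictly diagonally dominant, eliminates the weights $d_{K+i}$, invokes the Hawkins--Simon theorem to convert the remaining $K$ inequalities into positivity of the leading principal minors of an auxiliary matrix $B$ (evaluated in closed form through a Hurwitz-type determinant, using the steady-state identity $x_{K+i}=\gamma_{di}m_i/(\lambda_i x_i+\gamma_i)$), and concludes with McKenzie's theorem that a dominant-diagonal matrix with negative diagonal is Hurwitz. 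You instead isolate the single source of inter-asset coupling --- the rank-one term generated by the shared state $(l,n)$ --- factor the characteristic polynomial with the matrix determinant lemma, and check the scalar frequency-domain condition $\mathrm{Re}\,W(s)\geq 1$ on the closed right half-plane via positive-realness of the block resolvent entries; your formulas for $W$, for $\mathrm{Re}\,f_i(i\omega)$, and for the block determinant $\alpha_i\beta_i-\lambda_i^2x_iy_i=\gamma_i\lambda_i y_i+\widetilde{\gamma}_{di}\lambda_i x_i+\widetilde{\gamma}_{di}\gamma_i>0$ all check out. The paper's route buys a connection to the dominant-diagonal machinery of mathematical economics; yours avoids the search for dominance weights, reduces a $2K\times 2K$ spectral problem to a one-dimensional inequality, and in fact proves the Jacobian is Hurwitz at every point of the open positive orthant (your determinant bound never uses the steady-state relations), not only at the equilibrium. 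The one step to make airtight is the passage from $\mathrm{Re}\,f_i(i\omega)\geq 0$ to $\mathrm{Re}\,f_i(s)\geq 0$ on the whole closed half-plane: on an unbounded domain the minimum principle requires a Phragm\'en--Lindel\"of refinement or, as you yourself note, the standard criterion for rational positive-real functions (analyticity in the open right half-plane, no imaginary-axis poles, nonnegative real part on the axis, decay at infinity), which your $f_i$ satisfies since its poles are the eigenvalues of an already-stable $2\times 2$ block.
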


It was shown in \cite{Beland2016} that every partially segmented OTC market model with several assets has a steady state which is unique.
For the proof of asymptotic stability, we denote by $x^{\ast}$ the unique equilibrium state, that is $F(x^{\ast})=0.$ We take a multivariate Taylor's expansion of the function $F$ at the equilibrium state $x^{\ast}$ and we just keep a linear approximation of our differential system, that gives rise to the linear system.
The steady state is reached when all of the investors' state proportions remain constant in time, so all derivatives in the left-hand-side of equation (6) are 0. The linear system obtained has the form $\dot{y}=Ay$, where $A$ is the Jacobian matrix at the equilibrium. Note that $y=x-x^{\ast}$ has the origin as equilibrium state, and this equilibrium has the same properties as the steady state $x^{\ast}$.
We have thus reduced the problem to the study of the stability of the matrix $A$, that is to show that all its eigenvalues have a negative real parts. This problem leads us to a new class of matrices that has never been studied before, but we show that it is connected to the diagonally dominant matrices, using the steady state and the results in \cite{Mckenzie2009} and \cite{Garleanu1949}.
Since the square matrix obtained is a matrix with non positive off-diagonal elements, we use the Hawkins-Simon theorem (see \cite{Garleanu1949}) to find a positive vector $d=(d_1,d_2,...,d_3)$ that is a solution of the inequality $Ax> 0$. Finally, we extend MacKenzie's theorem  in the current context.

Let us first simplify the notations of our Master Equations (5). Set $x_{i}=\mu_t(hi,n)$; $x_{K+i}=\mu_t(li,o)$, for all $i=1,2,\cdots, K$ and $m=\sum_{i\in\mathcal{I}} m_i$. The above system with the $1+K$ constraints becomes

\begin{align}\label{eq:pmasterSystem}
\begin{split}
	\dot{x}_{i} &= -\lambda_i x_{i}x_{K+i} - \widetilde{\gamma}_{i}x_{i}- \widetilde{\gamma}_{ui} \sum \limits_{\underset{j \neq i}{j=1}}^n x_{j}+ \widetilde{\gamma}_{ui}(1-m), \ \forall i\in\mathcal{I}\\
	\dot{x}_{K+i} &=  -\lambda_i x_{i}x_{K+i} - \gamma_{i}x_{K+i} + \gamma_{di}m_{i}, \ \forall i\in\mathcal{I}
\end{split}
\end{align}

Let $x=(x_{1},x_{2},...,x_{2K})$ denote a vector in $\mathbb{R}^{2K}$.
And let $I=[0,1]^{2K}$ with $f=(f_1,f_2,...,f_{2K}):I\longrightarrow \mathbb{R}^{2K}$, defined by:

\begin{align}
\begin{split}
 f_{i}(x) &= -\lambda_i x_{i}x_{K+i} - \widetilde{\gamma}_{i}x_{i}- \widetilde{\gamma}_{ui} \sum \limits_{\underset{j \neq i}{j=1}}^n x_{j}+ \widetilde{\gamma}_{ui}(1-m)\\
 f_{K+i}(x) &=-\lambda_i x_{i}x_{K+i} - \gamma_{i}x_{K+i} + \gamma_{di}m_{i}, \ \forall i\in\mathcal{I}.
\end{split}\label{eq:steadyMuhin}
\end{align}

The Jacobian matrix of $f$ is the $2K\times 2K$ matrix given by $J(x)=(\frac{\partial f_i(x)}{\partial x_j})$ where

\begin{description}

\item  $\frac{\partial f_i(x)}{\partial x_j}=-\lambda_i x_{K+i} - \widetilde{\gamma}_{i}$; $\frac{\partial f_i(x)}{\partial x_{K+i}}=-\lambda_i x_{i}$ and $\frac{\partial f_i(x)}{\partial x_{i}}=-\widetilde{\gamma}_{ui}$; $\frac{\partial f_i(x)}{\partial x_{K+j}}=0$ for $j\neq i$
\item  $\frac{\partial f_{K+i}(x)}{\partial x_i}=-\lambda_i x_{K+i}$; $\frac{\partial f_{K+i}(x)}{\partial x_{K+i}}=-\lambda_i x_{i}-\gamma_{i}$ and $\frac{\partial f_{K+i}(x)}{\partial x_{j}}=0$; $\frac{\partial f_{K+i}(x)}{\partial x_{K+j}}=0$

    \item  for $j\neq i$.

\end{description}

Hence the matrix $J(x)$ can be expressed as the block matrix

\[ J(x)= \left[ \begin{array}{c|c} A_{11} & A_{12} \\ \hline A_{21} & A_{22} \end{array} \right] \] where
the matices $A_{11}$, $A_{12}$, $A_{21}$ and $ A_{22}$ are described explicitly in Appendix \textbf{C}.

 Because the matrix $J(x)$ is a square matrix with non positive off-diagonal elements, we use the Hawkins-Simon theorem (see \cite{Garleanu1949}) to find a positive vector $d=(d_1,d_2,...,d_3)$ such that $J(x)d> 0$. This condition is equivalent to showing that all the determinants of the principal minors of this matrix are positive. We notice that, after elementary operations on the lines of these determinants, each of them is equivalent to the well-known Hurwitz determinant. This connection enables us, to show that these determinants are positive. Thus the given matrix is a d.d matrix. Hence the unique steady state $x^{\ast}$ is asymptotically stable, and we have obtained the proof of theorem 3.1.

\bigskip

\section{Equilibrium Inter-investor Asset Pricing}
The steady state equilibrium proportions of investors for a segmented OTC model with multiple assets is determined in \cite{Beland2016} and its asymptotic stability is established above. In this section, we will use the equilibrium masses above, in order to compute the equilibrium bargaining prices $P_1$, $P_2$,...,$P_{K}$ for the respective assets
1,2,...,$K$.
 The authors in \cite{Begm2013} propose a method to show how to obtain the traded prices of the assets at the unique steady state (equilibrium) for non-segmented market models. These prices, however, were not computed explicitly. The objective of this section is to give an explicit formula of the steady state traded prices for any partially segmented market model with multiple assets.

The ownership and price processes, $\theta_{i}$ and $P_i(t)$, respectively, were introduced in section 2 along with the asset value process $dA(t)$. Agents are risk-neutral and infinitely lived. Agents can invest in a bank account with a risk-free interest rate of $r$, assumed to be constant. A low-type agent $li$, when owning the asset has a holding cost $\delta_{di}$ per time unit while a high-type agent has the full holding dividend $\delta_{hi}$. Let $U$ denote a utility function. A $consumption$ $process$ is an $\mathcal{F}_{t}$-progressively
measurable, nonnegative process $c$ satisfying $\displaystyle{\int_{0}^{T}}c(t)~\textrm{d}t < \infty$ almost surely. The $cumulative$ $consumption$ $process$ $C$ is related to the consumption rate process $c$ and is defined by the formula $C(t)=\displaystyle{\int_{0}^{t}}c(s)~\textrm{d}s$, where $0\leq t\leq T$.
Following Duffie et al. \cite{Duffie2005}, we obtain for the case of multiple assets, the wealth equation:
\begin{align}
\begin{split}
  \dd W(t) &= rW(t)\dd t - C(t)\dd t \\
  &\qquad + \sum_{i\in\mathcal{I}}\left[\theta_i(t)\left(\delta_{hi}-\delta_{di}\mathbbm{1}_{\{Z(t)=(li,o)\}}\right)\dd t - P_i(t)\dd \theta_i(t) \right]
\end{split}
\end{align}
where $C$ is a cumulative consumption process, $\theta_i(t) \in \{0,1\}$ is a feasible holding process for the asset $i$,
with initial wealth $W(0) = w_0$.
As usual, prices are obtained from the following expected utility maximization problem:
\begin{align}
  &\sup_{\{C(v),\theta_1(v),...,\theta_K(v)\}} \mathbb{E}\left[\int_t^{\infty}e^{-r(v-t)}U(C(v))\dd v \ | \ Z(t)=z, W(t) = w\right]
\end{align}
subject to the constraint \begin{align}
\begin{split}
  \dd W(t) &= rW(t)\dd t - C(t)\dd t \\
  &\qquad + \sum_{i\in\mathcal{I}}\left[\theta_i(t)\left(\delta_{hi}-\delta_{di}\mathbbm{1}_{\{Z(t)=(li,o)\}}\right)\dd t - P_i(t)\dd \theta_i(t) \right].
\end{split}
\end{align}
Proceeding exactly as in  \cite{Begm2013}, section 4, and in the spirit of Duffie et al. (see \cite{Duffie2005}, \cite{Duffie2007}), the value functions $V(t,z)$ for the states $z\in E$ are linear functions in wealth and satisfy the following Hamilton-Jacobi-Bellman equations:

\begin{eqnarray}
   \dot{V}(t,(l,n))           & =& -\sum_{i\in\mathcal{I}} V(t,(hi,n))\widetilde{\gamma}_{ui} + \left(r + \sum_{i\in\mathcal{I}}\widetilde{\gamma}_{ui} \right) V(t,(l,n))\label{eq:pV(l,n)-point}\\
    \dot{V}(t,(hi,n)) &=&  - \left(V(t,(hi,o))- P_i(t)\right) \lambda_i\mu_t(li,o) - V(t,(l,n)) \widetilde{\gamma}_{di}\label{eq:pV(hi,n)-point} \\
                 & & + \left(\widetilde{\gamma}_{di} + r +\lambda_i \mu_t(li,o)\right) V(t,(hi,n)\nonumber\\
  \dot{V}(t,(hi,o)) &=& \left(\gamma_{di} + r\right)V(t,(hi,o)) - \gamma_{di}V(t,(li,o)) - \delta_{hi} \label{eq:pV(hi,o)-point}\\
  \dot{V}(t,(li,o)) &=& \left(\gamma_{ui} + r + \lambda_i\mu_t(hi,n)\right)V(t,(li,o)) - \gamma_{ui}V(t,(hi,o))\label{eq:pV(li,o)-point} \\
  && -\lambda_i \mu_t(hi,n)(V(t,(l,n))+P_i(t)) - (\delta_{hi} - \delta_{di})\nonumber
\end{eqnarray}

From which we get the steady-state equations:

\begin{align*}
0  &=-\sum_{i\in\mathcal{I}} V(t,(hi,n))\widetilde{\gamma}_{ui} + \left(r +\sum_{i\in\mathcal{I}}\widetilde{\gamma}_{ui}\right) V(t,(l,n))\\
0 &=  - \lambda_i\mu(li,o) \left(V(hi,o)- P_i\right) - \widetilde{\gamma}_{di}  V(l,n) + \left(\widetilde{\gamma}_{di} + r +\lambda_i \mu(li,o)\right) V(hi,n),\\
0 &= \left(\gamma_{di} + r\right)V(hi,o) - \gamma_{di}V(li,o) - \delta_{hi},\\
0 &= \left(\gamma_{ui} + r + \lambda_i\mu(hi,n)\right)V(li,o) - \gamma_{ui}V(hi,o) -\lambda_i \mu(hi,n)(V(l,n)+P_i)\\
 -& (\delta_{hi}-\delta_{di}), \ \forall i\in\mathcal{I}.
\end{align*}

Because we do not consider the presence of market-makers in our markets, the prices are negotiated bilaterally by the agents. A high-type non-owner pays at most his reservation value $\Delta^h_i = V(hi,o)-V(hi,n)$
in order to obtain the asset $i$ and the low-type owner requires a price of at least $\Delta^l_i = V(li,o)-V(l,n)$. The steady state price $P_{i}$ for asset $i$ is such that $\Delta^l_i \leq P_i \leq \Delta^h_i$. Following \cite{Duffie2007}, Nash (1950) bargaining leads to the following equibrium price for asset $i$:
\begin{equation}\label{eq:price}
  P_i = (1-q_{i})\Delta^l_i + q_{i}\Delta^h_i
\end{equation}
where $q_{i} \in (0,1)$ represents the bargaining power of the seller $(li,o)$ and depends only on the asset $i\in\mathcal{I}.$ We assume in the rest of this article that it is constant, that is, $q_{i}=q$ for each asset $i;$ this assumption can be understood as replacing $q_i$ by the mean of the distribution of the bargaining powers of the seller for the above $K$ assets.  We need to solve the above linear system. We set $V(hi,n)=x_{i}$, $V(li,o)=y_{i}$, $V(hi,o)=z_{i}$, $V(l,n)=w$, $\lambda_i\mu(li,0)=a_{i},$
 $\widetilde{\gamma}_{di}+r+\lambda_i\mu(li,0)=b_{i}$, $\gamma_{ui}+r+\lambda_i\mu(hi,n)=b_{i},$ $\lambda_i\mu(hi,n)=d_{i}$ and $r_{i}=\gamma_{ui}+r$. Using these notations, we have $P_i = (1-q)(y_{i}-w) + q(z_{i}-x_{i})$ and hence the above linear system is equivalent to the system

\begin{align*}
0  &=-\sum_{i\in\mathcal{I}} x_{i}\widetilde{\gamma}_{ui} + \left(r +\sum_{i\in\mathcal{I}}\widetilde{\gamma}_{ui}\right) w\\
0 &=  (b_{i}-qa_{i})x_{i} + (1-q)a_{i}y_{i} - (1-q)a_{i}z_{i}- (\widetilde{\gamma}_{di} + (1-q)a_{i})w, \ \forall i\in\mathcal{I}\\
0 &=  r_{i}z_{i} - \gamma_{di}y_{i} - \delta_{hi}, \ \forall i\in\mathcal{I}\\
  0 &= qd_{i}x_{i} + (c_{i}-qd_{i})y_{i} - (\gamma_{ui} + qd_{i})z_{i}- qd_{i}w - (\delta_{hi}-\delta_{di}),\\
  &\qquad
   \ \forall i\in\mathcal{I}.
\end{align*}

It is a linear system and we have three equations for each $i$ in $\mathcal{I}$.
If we let $w$ be the parameter, we obtain a system of three equations in three variables $x_i$, $y_i$ and $z_i$ which are given in terms of $w$. Since  $z_{i} = \frac{\gamma_{di}}{r_{i}}y_{i} + \frac{\delta_{hi}}{r_{i}}$, it is sufficient to express $x_i$ and $y_i$ in term of $w.$ A straightforward calculation gives us

\begin{equation}\label{eq:price}
x_i=\frac{\left(1+\frac{\gamma_i+r}{qd_i}\right) \left(1+\frac{\widetilde{\gamma}_{di}}{(1-q)a_i}\right)-1}{\left(1+\frac{\gamma_i+r}{qd_i}\right)
\left(1+\frac{\widetilde{\gamma}_{di}+r}{(1-q)a_i}\right)-1}w + \frac{\delta_{di}}{\left(1+\frac{\gamma_i+r}{qd_i}\right)\left(1+\frac{\widetilde{\gamma}_{di}+r}{(1-q)a_i}\right)-1}.
 \end{equation}

  For notational simplicity, we set $\Psi(i,r)=\left(1+\frac{\gamma_i+r}{qd_i}\right)\left(1+\frac{\widetilde{\gamma}_{di}}{(1-q)a_i}\right)-1$,  $\Gamma(i,r)=\left(1+\frac{\gamma_i+r}{qd_i}\right)\left(1+\frac{\widetilde{\gamma}_{di}+r}{(1-q)a_i}\right)-1$, $\Lambda(i,r)=\frac{\Psi(i,r)}{\Gamma(i,r)}$  and $\Omega(i,r)=\frac{{\delta}_{di}}{qd_{i}\Gamma(i,r)}$. With these notations we have $x_i=\Lambda(i,r)w+\Omega(i,r).$ Now we can compute $w$ explicitly using the expression of $x_i$ and the first equation of our linear system. Thus $w=\frac{\sum_{i\in\mathcal{I}}\widetilde{\gamma}_{ui}\Omega(i,r)}{r +\sum_{i\in\mathcal{I}}\widetilde{\gamma}_{ui}(1-\Lambda(i,r))}$ which is positive because $\Lambda(i,r)$ belongs to $(0,1)$. We obtain explicit expressions of $x_i$ and $y_i$, which, in turn, give us the formula for the price $P_i$ as shown below.

\begin{theo}
For any partially segmented market model, there exists a unique steady-state equilibrium and the equilibrium prices are given by
\begin{align}
\begin{split}
P_{i} &= \left( \frac{{\gamma}_{di}}{(1-q)a_{i}\Gamma(i,r)}-\Lambda(i,r)\right) \frac{\sum_{i\in\mathcal{I}}\widetilde{\gamma}_{ui}\Omega(i,r)}{r +\sum_{i\in\mathcal{I}}\widetilde{\gamma}_{ui}(1-\Lambda(i,r))}+
\frac{{\delta}_{hi}}{r}\\
&\qquad -q\Omega(i,r)\left(1+\left(1-r+\frac{{\gamma}_{di}}{q}\right)\left(1+\frac{\widetilde{\gamma}_{di}}{(1-q)a_i}\right)\right), \ \forall i\in\mathcal{I}\end{split}\label{eq:steadyMuhin}
\end{align}

\end{theo}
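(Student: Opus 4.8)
The plan is to reduce the steady-state Hamilton--Jacobi--Bellman system to a linear system in the value functions, exploit its block structure to solve it in closed form, and then assemble the Nash bargaining price $P_i=(1-q)(y_i-w)+q(z_i-x_i)$. Existence and uniqueness of the equilibrium masses is already supplied by Theorem 3.1 and \cite{Beland2016}; with those masses fixed, the coefficients $a_i,b_i,c_i,d_i,r_i$ are determined constants, and the four steady-state equations become a genuinely linear system in the unknowns $x_i=V(hi,n)$, $y_i=V(li,o)$, $z_i=V(hi,o)$ and $w=V(l,n)$. The structural observation that drives everything is that only the first equation couples distinct assets, and it does so solely through the single aggregate unknown $w$; the other three equations, for each fixed $i\in\mathcal{I}$, involve only $x_i,y_i,z_i$ together with $w$. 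First I would therefore treat $w$ as a parameter and solve, for each $i$ separately, the resulting three-equation block.

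To solve the block I would eliminate $z_i$ through the third equation, $z_i=\frac{\gamma_{di}}{r_i}y_i+\frac{\delta_{hi}}{r_i}$, leaving two equations in $x_i$ and $y_i$ with $w$ appearing affinely on the right-hand side. Solving this $2\times2$ system by Cramer's rule (or direct elimination) produces the affine form $x_i=\Lambda(i,r)\,w+\Omega(i,r)$ recorded above, and an analogous affine expression for $y_i$. The determinant of the reduced system equals, up to a strictly positive factor, the quantity $\Gamma(i,r)$; since each factor of $\Gamma(i,r)$ exceeds $1$ we have $\Gamma(i,r)>\Psi(i,r)>0$ and hence $\Lambda(i,r)\in(0,1)$, so the block is nonsingular with a unique solution. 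This nonvanishing of $\Gamma(i,r)$ is what will eventually deliver the uniqueness asserted in the theorem.

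Next I would determine $w$ itself. Substituting $x_i=\Lambda(i,r)w+\Omega(i,r)$ into the single coupling equation $0=-\sum_{i\in\mathcal{I}}\widetilde{\gamma}_{ui}x_i+\bigl(r+\sum_{i\in\mathcal{I}}\widetilde{\gamma}_{ui}\bigr)w$ collapses it to one scalar linear equation in $w$, whose solution is
\[
w=\frac{\sum_{i\in\mathcal{I}}\widetilde{\gamma}_{ui}\,\Omega(i,r)}{\,r+\sum_{i\in\mathcal{I}}\widetilde{\gamma}_{ui}\bigl(1-\Lambda(i,r)\bigr)\,}.
\]
Because $r>0$ and every $\Lambda(i,r)\in(0,1)$, the denominator is strictly positive, so $w$ exists, is unique, and is positive. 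Together with the block solvability above, this shows the full value-function system has exactly one solution, establishing both the existence and the uniqueness claim; back-substitution then gives $x_i$, $y_i$ and $z_i$ explicitly in terms of the exogenous parameters alone.

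Finally I would insert the explicit $x_i$, $y_i$ and $z_i=\frac{\gamma_{di}}{r_i}y_i+\frac{\delta_{hi}}{r_i}$ into the bargaining identity $P_i=(1-q)(y_i-w)+q(z_i-x_i)$, collect the coefficient of $w$ and the constant term, and simplify. I expect the main obstacle to be precisely this final consolidation rather than any new idea: each of $x_i$ and $y_i$ is built from products such as $\bigl(1+\frac{\gamma_i+r}{qd_i}\bigr)\bigl(1+\frac{\widetilde{\gamma}_{di}}{(1-q)a_i}\bigr)$, and one must regroup them so that the coefficient of $w$ collapses to $\frac{\gamma_{di}}{(1-q)a_i\,\Gamma(i,r)}-\Lambda(i,r)$ and the constant to $\frac{\delta_{hi}}{r}-q\,\Omega(i,r)\bigl(1+\bigl(1-r+\frac{\gamma_{di}}{q}\bigr)\bigl(1+\frac{\widetilde{\gamma}_{di}}{(1-q)a_i}\bigr)\bigr)$. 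The conceptual core is the decoupling of the asset blocks through the scalar $w$ and the nonvanishing of $\Gamma(i,r)$; the closed form then follows by disciplined but routine algebra.
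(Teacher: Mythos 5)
Your proposal is correct and follows essentially the same route as the paper: eliminate $z_i$ via $z_i=\frac{\gamma_{di}}{r_i}y_i+\frac{\delta_{hi}}{r_i}$, solve each asset block for $x_i=\Lambda(i,r)w+\Omega(i,r)$ and $y_i$ with $w$ as a parameter, recover $w=\frac{\sum_{i}\widetilde{\gamma}_{ui}\Omega(i,r)}{r+\sum_{i}\widetilde{\gamma}_{ui}(1-\Lambda(i,r))}$ from the single coupling equation, and substitute into $P_i=(1-q)(y_i-w)+q(z_i-x_i)$, with the final simplification using $\Gamma(i,r)=\Psi(i,r)+\frac{r}{(1-q)a_i}$. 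The only cosmetic difference is that you make the invertibility of each block (via $\Gamma(i,r)>0$ and $\Lambda(i,r)\in(0,1)$) explicit, which the paper leaves implicit.
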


For each asset $i$, two bargainers, a seller and a buyer, are faced with a finite number of possible alternatives. If the two investors agree on an alternative payoff, the transaction is executed. Otherwise, if they fail to agree, the transaction does not occur. When the bargaining mechanism begins, the seller of the asset $i$ proposes the price $P^{+}_{i}$ and accepts any price greater than or equal to $P^{+}_{i}$, the buyer in turn proposes the price $P^{-}_{i}$ and accepts all prices less than or equal to $P^{-}_{i}$. Here we are in presence of a Nash bargaining problem or the Rubinstein-type game, for more details we refer to (\cite{Duffie2007},\cite{Osborne1990}). A $Nash$ $equilibrium$ is an action profile with the property that, no single player can obtain a higher payoff by deviating unilaterally from this profile; a player will not gain by changing his strategy. The unique price $P^{\ast}_{i}$ at which the transaction is executed is a Nash equilibrium. It is well-known that there exists a vector of payoffs $(P^{\ast}_1,P^{\ast}_2,...,P^{\ast}_{K})$ which is the unique Nash bargaining vector solution.
The negotiation process stops when it reachs a Nash equilibrium.
Following (\cite{Duffie2007},\cite{Osborne1990}) we apply the devise of Rubinstein-Wolinsky to calculate explicitly the unique bargaining powers that represent the prices. For a given asset $i$, the two investors find each other randomly, the seller with probability $\hat{q}$ and the buyer with probability $1-\hat{q},$ to suggest a trading price. The seller of the asset $i$ makes an offer and the buyer who needs this asset proposes another price. The seller can accept or reject the offer. If the offer is rejected, the seller of the asset $i$ receives the dividend from the asset $i$ during this period. The bargaining may break down before a counter offer is made. At the next period $\Delta^{i}t$ later, one of the investor is chosen to make a new offer. This process occurs simultaneously for each of the trading assets $1,2,...,K.$ The next period at which the first investor in the market is chosen randomly after leaving his first partner is $\min \{\Delta^{i}t, \, 1\leq i\leq K\}$; and the next period at which the last investor in the market is chosen randomly after leaving his first partner is $\max \{\Delta^{i}t, \, 1\leq i\leq K\}.$ The limiting price as $\Delta^{i}t$ approaches zero is represented by $P_i = (1-q)\Delta^l_i + q\Delta^h_i$, with the bargaining power of the seller $q$ equal to $\hat{q}.$
An impatient agent who wants to optimize its profit can search for alternative partners during negotiations to get the best offer for the asset $i$. If he finds a godsend or he trusts to find one, he leaves his partner for a new one. In this market, for each asset $j\neq i$, a similar mechanism may occur.
But there is no interaction between two agents who are bargaining for two separate assets, because the market is partially segmented. If an agent leaves his partner, he will bargain with another investor that has the same focus regarding asset. However, an unsophisticated investor does not search for an alternative partner during the bargaining, in this case the formulae of payoff price remains unchanged except the new bargaining power that becomes $q=\frac{1}{K}\sum \limits_{\underset{}{i=1}}^K   \frac{\hat{q}(r+\gamma_i+\widetilde{\gamma}_i+\lambda_{i}\mu(li,o))}
{\hat{q}(r+\gamma_i+\widetilde{\gamma}_i+\lambda_{i}\mu(li,o))+(1-\hat{q})(r+\gamma_i+\widetilde{\gamma}_i+\lambda_{i}\mu(hi,n))}$.
We solve the homogeneous linear system obtained from the equations $(9)-(12)$ to get the equilibrium prices as the unique solution of this system. Thus we have the prices $P_1,P_2,...,P_{K}.$ \newline
If we agree that the potential number of open days for which the stock market is open in a year is around 250, given the equilibrium proportion of potential buyers, $\mu(hi,n)$, for the asset $i$, the average number of days needed to sell the asset $i$ is given by $250(\lambda_i\mu(hi,n))^{-1}$, with $i=1,2,...,K.$ This average time is called $equilibrium$ $timing$ $of$ $seller$. We obtained a finite sequence of times for all sold
assets. The last asset to be sold gives the maximum  equilibrium timing and the first asset sold gives the minimum equilibrium timing.


\section{A Numerical Example}
 For this numerical illustration, we consider an OTC market with $K=2$ assets. The search intensity of $\lambda_1=1250$ in the first table below means that an investor looking for asset 1 expects to be in contact with 1250 investors each year, that is on average $1250\diagup 250=5$ investors per day. The search intensity of $\lambda_2=2000$ in the first table below means that an investor looking for asset 2 expects to be in contact with 2000 investors each year, that is $2000\diagup 250=8$ investors per day on average.
The parameters for the pricing model can be found in the table below.

\vspace{0.5cm}

\begin{tabular}{|*{12}{c|}}
    \hline
     $\lambda_1$  & $\lambda_2$  & $\gamma_{u1}$ & $\gamma_{d1}$ & $\gamma_{u2}$ & $\gamma_{d2}$  & $m_{1}$  & $m_{2}$  & $\widetilde{\gamma}_{u1}$  & $\widetilde{\gamma}_{d1}$  & $\widetilde{\gamma}_{u2}$  & $\widetilde{\gamma}_{d2}$  \\
    \hline
     1250  & 2000  & 5  & 0.5  & 8 & 3 & 0.3 & 0.6 & 2.5 & 3.5 & 0.4 & 1.5   \\
    \hline

\end{tabular}

\vspace{0.3cm}
The second group of parameters for the pricing model is given in the following table
\begin{center}
\begin{tabular}{|*{6}{c|}}
    \hline
   $\delta_{h1}$  & $\delta_{h2}$  & $\delta_{d1}$  & $\delta_{d2}$ & q & r \\
    \hline
     2.5 & 3.5 & 0.4 & 1.5 & 0.5 & 0.05  \\
    \hline

\end{tabular}
\end{center}
\vspace{0.3cm}
Now we can compute the unique equilibrium vector proportions and asset prices at the equilibrium.
We recall that the equilibrium occurs when we have
\begin{align}
0 &=-\lambda_i \mu(hi,n)\mu(li,o) + \widetilde{\gamma}_{ui} \mu(l,n) - \widetilde{\gamma}_{di}\mu(hi,n), \  i\in \{1,2\} \label{eq:psolveEq1} \\
0 &= -\lambda_i \mu(hi,n)\mu(li,o) - \gamma_{ui}\mu(li,o) + \gamma_{di}\mu(hi,o), \  i\in \{1,2\}\label{eq:psolveEq2}
\end{align}
with the constraints
\begin{align}
	\mu(hi,o) + \mu(li,o) &= m_i, \  i\in \{1,2\} \label{eq:psolveConstr1} \\
	 m_1+ m_2 + \mu(h1,n) + \mu(h2,n) + \mu(l,n) &= 1 \label{eq:psolveConstr2}
\end{align}

The equations $(20)$ and $(21)$ imply for $i\neq j$, the following system
\begin{align*}
  \mu(hi,n)  &=  - \frac{\widetilde{\gamma}_{ui}}{\widetilde{\gamma}_i} \mu(hj,n) + \frac{%
\gamma_{i}\gamma_{di}m_{i}}{\widetilde{\gamma}_{i}(\lambda_i
\mu(hi,n)+\gamma_{i})} + \widetilde{\gamma}_{ui}\left(1 - m \right)  + \frac{\gamma_{di}}{%
\widetilde{\gamma} _{i}}m_{i}, \ i\in \{1,2\}
.\end{align*}
By setting $x=\mu(h1,n)$ and $y=\mu(h2,n)$ and replacing all parameters by their numerical values
we get the following system of two equations with two variables.


\begin{equation*}
  \left\{
    \begin{aligned}
      y&=- 1.05 x + \frac{0.1375}{1250
x+5.5} + 0.175&\text{(1)}\\
      x&=- 1.2 y + \frac{1.98}{2000
y+11} + 0.02&\text{(2)}\\
    \end{aligned}
  \right.
\end{equation*}

We can substitute $y$ in the second equation with its value in the first equation. By straightforward calculations, one obtains
the following quartic equation $2294300x^{4}+606494x^{3}-75548.2132x^{2}-693.572x-1.500884=0.$ The only positive root of this equation is
$x^{\ast}=0.0991$, that is $\mu^{\ast}_{(h1,n)}=0.0991$ and therefore $\mu^{\ast}_{(h2,n)}=0.0720$. From the relation
\begin{align}
\mu(li,o) &= \frac{\gamma_{di} m_i}{\lambda_i \mu(hi,n) + \gamma_i}, \ i\in \{1,2\}
\end{align}
 at the equilibrium, we have
$\mu^{\ast}_{(l1,0)}=0.0011$ and $\mu^{\ast}_{(l2,0)}=0.0116$; moreover, we use the constraints $(22)$ and $(23)$ to get
$\mu^{\ast}_{(h1,0)}=0.2984$, $\mu^{\ast}_{(h2,o)}=0.5883$  and $\mu^{\ast}_{(l,n)}=0.0289.$ It remains to compute the equilibrium prices
$P_1$ and $P_2$ for the two assets. From Theorem 4.1, we have

\begin{align}
\begin{split}
P_{i} &= \left( \frac{{\gamma}_{di}}{(1-q)a_{i}\Gamma(i,r)}-\Lambda(i,r)\right) \frac{\sum_{i\in\mathcal{I}}\widetilde{\gamma}_{ui}\Omega(i,r)}{r +\sum_{i\in\mathcal{I}}\widetilde{\gamma}_{ui}(1-\Lambda(i,r))}+
\frac{{\delta}_{hi}}{r}\\
&\qquad -q\Omega(i,r)\left(1+(1-r+\frac{{\gamma}_{di}}{q})(1+\frac{\widetilde{\gamma}_{di}}{(1-q)a_i})\right), \ i\in \{1,2\} \end{split}\label{eq:steadyMuhin}
 .\end{align}
Before computing the values $P_1$ and $P_2$ we need to determine the intermediate settings that appear in the formulae. They have been grouped in the following table

\vspace{0.3cm}

\begin{center}
\begin{tabular}{|*{8}{c|}}
    \hline
      $\Gamma(1,r)$ & $\Gamma(2,r)$  & $\Omega(1,r)$  & $\Omega(2,r)$  & $\Lambda(1,r)$  & $\Lambda(2,r)$  & $\Psi(1,r)$ & $\Psi(2,r)$ \\
    \hline
      5.7159 & 0.3075 & 0.0011 & 0.0677 & 0.9861 & 0.9837 & 5.6366 & 0.3026  \\
    \hline

\end{tabular}
\end{center}

\vspace{0.3cm}
Using the data in the table above and the formula $(25)$, we have $P_1=50.0031$ and $P_2=69.6551.$
The following table gives the equilibrium proportions and asset prices for the model with two assets:


\vspace{0.3cm}

\begin{tabular}{|p{1cm}|*{8}{@{\hskip.9mm}c@{\hskip.9mm}|}}
\hline
      $\mu^{\ast}_{(h1,n)}$ & $\mu^{\ast}_{(h2,n)}$ & $\mu^{\ast}_{(l1,o)}$ & $\mu^{\ast}_{(l2,o)}$ & $\mu^{\ast}_{(h1,o)}$ &    $\mu^{\ast}_{(h2,o)}$  & $\mu^{\ast}_{(l,n)}$ & $P_{1}$  & $P_{2}$ \\ \hline
     0.0991 & 0.0720 & 0.0011 & 0.0116 & 0.2989 & 0.5883 & 0.0289 & 50.0031 & 69.6551\\ \hline

\end{tabular}

\vspace{0.3cm}
 We recall that if the equilibrium proportion of potential buyers for the asset $i$ is known, the average time needed to sell the asset $i$ is $250(\lambda_i\mu(hi,n))^{-1}$, with $i=1,2;$ that is $250(\lambda_1\mu(h1,n))^{-1}=2$ days to sell asset 1 and $250(\lambda_2\mu(h2,n))^{-1}=1.7$ days to sell asset 2.

\section{Comparative Statics and Cross Effect}

We would like to study the rate of change of the equilibrium prices when some exogenous parameters of the model vary simultaneously. Because these models allow multiple assets, we cannot adapt the technique of Duffie \cite{Duffie2012} which is based on the analysis of each exogenous parameter independently.
For the comparative statics of our model, the equilibrium prices are regarded as functions with several variables and we use, for a given asset $i$, the limiting bargaining power associated during negotiation. When we change the same type of parameters, the equilibrium prices can still exist or the equilibrium can be broken. It is important to vary the same type of parameters at once and it is useful to determine the behavior of the quantities $\Psi(i,r)$,  $\Gamma(i,r)$, $\Lambda(i,r)$  and $\Omega(i,r).$ The following result characterizes the variation of equilibrium prices when the same type of parameters approach infinity. One can see that, if all $\gamma_{ui}$ and $\gamma_{di}$ approach infinity, then $\Omega(i,r)$ vanishes while $\Psi(i,r)$ and $\Gamma(i,r)$ approach infinity. We also have that $\Psi(i,r)$ and  $\Gamma(i,r)$ become closer to zero as $\lambda_i$ tends to infinity for all $i$. Then for the same variation of $\lambda_i$, the expression $\Omega(i,r)$ converges to $\widehat{\Omega}(i,r)=\frac{{\delta}_{di}}{\frac{q}{1-q}\frac{\mu(hi,n)}{\mu(li,o)}(\widetilde{\gamma}_{di}+r)+ \gamma_i+r}$, while $1-\Lambda(i,r)$ converges to $1-\widehat{\Lambda}(i,r)=\frac{rq\mu(hi,n)}{\delta_{di}\mu(li,o)}\widehat{\Omega}(i,r)$ and $\frac{{\gamma}_{di}}{(1-q)a_{i}\Gamma(i,r)}$ tends to $\frac{q\gamma_{di}\mu(hi,n)}{(1-q)\delta_{di}\mu(li,o)}\widehat{\Omega}(i,r).$


\begin{prop}
The equilibrium prices satisfy the following properties.

\begin{itemize}

 \item When $\gamma_{uj}$ approaches infinity for all $j$, the price $P_i$ converges to \newline $\widehat{P}_i(\lambda)(\gamma_{u})=\frac{\delta_{hi}}{r}$;
\item When for all $j$, $\gamma_{dj}$ approaches infinity, the price $P_i$ converges to \\ $\overline{P}_i(\lambda)(\gamma_{u})=\frac{\delta_{hi}}{r}-\delta_{di}\frac{(1-q)\lambda_{i}\mu(hi,n)+\widetilde{\gamma}_{di}}{(1-q)
    \lambda_{i}\mu(li,o)+\widetilde{\gamma}_{di}+r}$;
\item If for all $j$, $\widetilde{\gamma}_{uj}$ approaches infinity or $\widetilde{\gamma}_{dj}$ approaches infinity, then the limit of the price $P_i$ does not exist;
\item When $\lambda_j$ approaches infinity for all $j$, the price $P_i$ converges to $\widehat{P}_i$, where the exact expression is given by

  \begin{align}
\begin{split}
\widehat{P}_i &= \left( \frac{q\gamma_{di}\mu(hi,n)}{(1-q)\delta_{di}\mu(li,o)}\widehat{\Omega}(i,r)-\widehat{\Lambda}(i,r)\right) \frac{\sum_{i\in\mathcal{I}}\widetilde{\gamma}_{ui}\widehat{\Omega}(i,r)}{r +\sum_{i\in\mathcal{I}}\widetilde{\gamma}_{ui}(1-\widehat{\Lambda}(i,r))}\\+
\frac{{\delta}_{hi}}{r}
&\qquad -\left( \frac{{\delta}_{di}(2-r+\frac{{\gamma}_{di}}{q})}{\frac{q}{1-q}\frac{\mu(hi,n)}{\mu(li,o)}(\widetilde{\gamma}_{di}+r)+ \gamma_i+r}\right), \ \forall i\in\mathcal{I}.\end{split}\label{eq:steadyMuhin}
\end{align}

\end{itemize}

\end{prop}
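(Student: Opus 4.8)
The plan is to start from the closed-form expression for $P_i$ in Theorem 4.1 and to treat each bullet as a limit computation on a rational function of the auxiliary quantities $\Psi(i,r)$, $\Gamma(i,r)$, $\Lambda(i,r)=\Psi(i,r)/\Gamma(i,r)$ and $\Omega(i,r)=\delta_{di}/(qd_i\Gamma(i,r))$, where $a_i=\lambda_i\mu(li,o)$ and $d_i=\lambda_i\mu(hi,n)$. The key preliminary step, which I would carry out once and reuse, is to record how each of these four quantities behaves under the four prescribed limits. Since the equilibrium masses $\mu(hi,n)$ and $\mu(li,o)$ are themselves functions of the parameters being sent to infinity, I would first extract their asymptotics from the steady-state relation $\mu(li,o)=\gamma_{di}m_i/(\lambda_i\mu(hi,n)+\gamma_i)$ together with the master equations, and only then feed the resulting limits of $a_i$ and $d_i$ into $\Psi$, $\Gamma$, $\Lambda$, $\Omega$. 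The preamble already performs this bookkeeping for the $\gamma$'s and for $\lambda$, so for the first, second and fourth bullets the remaining work is to substitute these asymptotics into the three additive pieces of $P_i$ and simplify.

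For the first bullet I would argue that $\gamma_{uj}\to\infty$ forces $\gamma_i\to\infty$, so the first factor of $\Gamma(i,r)$ diverges, giving $\Gamma(i,r)\to\infty$ and hence $\Omega(i,r)\to 0$ while $\Lambda(i,r)$ stays bounded in $(0,1)$. The coefficient $\gamma_{di}/((1-q)a_i\Gamma(i,r))$ then tends to $0$; the numerator $\sum_{i}\widetilde{\gamma}_{ui}\Omega(i,r)$ of the first term tends to $0$ against a denominator bounded below by $r$; and the bracket $(1-r+\gamma_{di}/q)$ in the third term stays bounded, so the third term $-q\Omega(i,r)(\cdots)$ also vanishes. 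Only $\delta_{hi}/r$ survives, matching $\widehat{P}_i(\lambda)(\gamma_u)$. The second bullet is the delicate one of the two ``$\gamma$'' limits: now $\gamma_{di}\to\infty$ appears simultaneously in $\Gamma$ (pushing $\Omega\to0$), in the bracket $(1-r+\gamma_{di}/q)$, and in the prefactor $\gamma_{di}/((1-q)a_i\Gamma)$, so I must keep track of the finite limits of the products $\gamma_{di}\Omega(i,r)$ and $\gamma_{di}/\Gamma(i,r)$ using the steady-state asymptotics of $a_i,d_i$; collecting these yields the stated $\overline{P}_i$.

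For the fourth bullet I would simply substitute the three limits quoted in the preamble, namely $\Omega(i,r)\to\widehat{\Omega}(i,r)$, $1-\Lambda(i,r)\to1-\widehat{\Lambda}(i,r)$, and $\gamma_{di}/((1-q)a_i\Gamma(i,r))\to q\gamma_{di}\mu(hi,n)\widehat{\Omega}(i,r)/((1-q)\delta_{di}\mu(li,o))$, into the three additive pieces of $P_i$. The only additional limit required is that of $\widetilde{\gamma}_{di}/((1-q)a_i)$ inside the bracket of the third term; combining it with the leading $1+(\cdots)$ produces the coefficient $(2-r+\gamma_{di}/q)$ appearing in the closed form for $\widehat{P}_i$.

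I expect the third bullet, the non-existence claim, to be the main obstacle, because it cannot be settled by a single substitution. Here $\widetilde{\gamma}_{ui}$ enters $P_i$ only through the two weighted sums $\sum_i\widetilde{\gamma}_{ui}\Omega(i,r)$ and $\sum_i\widetilde{\gamma}_{ui}(1-\Lambda(i,r))$, so as all $\widetilde{\gamma}_{uj}\to\infty$ the first term of $P_i$ approaches a quotient of two weighted sums whose value depends on the relative growth rates of the $\widetilde{\gamma}_{uj}$. I would therefore prove non-existence constructively, by exhibiting two parameter sequences tending to infinity along which these weighted ratios converge to different values. For $\widetilde{\gamma}_{dj}\to\infty$ the mechanism is different: it drives $\Lambda(i,r)\to1$ and $\Omega(i,r)\to0$, so both the numerator and the $\sum_i\widetilde{\gamma}_{ui}(1-\Lambda(i,r))$ part of the denominator degenerate and the first term becomes genuinely indeterminate; again I would display divergent subsequences. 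The subtlety in both cases is that the masses $\mu(hi,n)$ and $\mu(li,o)$, and hence $a_i$ and $d_i$, react to these same parameters through the steady-state relation, so after producing the candidate divergent sequences I must verify that this coupled dependence does not conspire to restore a common limit.
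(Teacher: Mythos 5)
Your overall strategy --- substituting the limits of $\Psi(i,r)$, $\Gamma(i,r)$, $\Lambda(i,r)$, $\Omega(i,r)$ into the three additive pieces of $P_i$, and proving non-existence by exhibiting two growth patterns of the $\widetilde{\gamma}_{uj}$ that give different limits of $\Theta=\frac{\sum_i\widetilde{\gamma}_{ui}\Omega(i,r)}{r+\sum_i\widetilde{\gamma}_{ui}(1-\Lambda(i,r))}$ --- is exactly the paper's for the first, second and fourth bullets and for the $\widetilde{\gamma}_{uj}$ half of the third. One caveat on your preliminary step: the paper does \emph{not} re-solve for the equilibrium masses along the limit; it treats $\mu(hi,n)$ and $\mu(li,o)$ as fixed inside $a_i$ and $d_i$, and the stated limit formulas only make sense under that convention. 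If you actually carry out the asymptotics of the masses that you announce (e.g.\ $\mu(li,o)=\gamma_{di}m_i/(\lambda_i\mu(hi,n)+\gamma_i)\to 0$ when $\gamma_{uj}\to\infty$), you will be proving a different statement with different, generally degenerate, answers; you should either drop that step or note the discrepancy explicitly.

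The genuine gap is in your treatment of $\widetilde{\gamma}_{dj}\to\infty$. You claim the first term of $P_i$ ``becomes genuinely indeterminate'' because numerator and denominator of $\Theta$ both degenerate; but the denominator is $r+\sum_i\widetilde{\gamma}_{ui}\left(1-\Lambda(i,r)\right)\ge r>0$, so it never degenerates. Since $\Omega(i,r)=O(1/\widetilde{\gamma}_{di})$, the numerator tends to $0$ and $\Theta\to 0$ outright; moreover $\Lambda(i,r)\to 1$, $\gamma_{di}/((1-q)a_i\Gamma(i,r))\to 0$, and the third term $q\Omega(i,r)\bigl(1+(1-r+\gamma_{di}/q)(1+\widetilde{\gamma}_{di}/((1-q)a_i))\bigr)$ converges to the finite value $\delta_{di}(q(1-r)+\gamma_{di})/(q\lambda_i\mu(hi,n)+\gamma_i+r)$. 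Hence, under the fixed-mass convention, $P_i$ \emph{converges}; the paper's own appendix computes precisely this finite limit, in tension with the wording of the third bullet. The two-sequence non-existence mechanism works only for $\widetilde{\gamma}_{uj}$, where the diverging parameter is the \emph{weight} appearing in both the numerator and the non-constant part of the denominator of $\Theta$, so that the limit depends on relative growth rates. As written, your plan cannot establish non-existence for $\widetilde{\gamma}_{dj}$, and you would need either to revisit the statement or to bring the dependence of the equilibrium masses on $\widetilde{\gamma}_{dj}$ into play to rescue it.
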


We observe that there are two types of exogenous parameters, the homogeneous parameters whose variation maintains the existence of equilibrium prices and the heterogeneous parameters that express the segmentation which break it. By changing randomly these parameters, the equilibrium prices can be broken. It means that we cannot go from a non-segmented model to a partially segmented model by a continuum of values. Which implies that, in every partially segmented market, the knowledge of the homogeneous parameters are not sufficient to capture optimal information and percolation in the market. Thus, to understand the behavior of any partially segmented market, we need the whole parameters and the unique equilibrium steady state. There are also two sorts of homogeneous parameters affecting the behavior of the equilibrium prices; the limit prices when both $\lambda_i$ and $\gamma_{ui}$ tend to infinity is the same when the parameters $\gamma_{ui}$ tend to infinity. The same phenomenon happens if we replace $\gamma_{di}$ by $\gamma_{ui}$ in the above limit. This framework shows that the behavior of the equilibrium prices at infinity are captured by the only parameters $\gamma_{ui}$ or $\gamma_{di}.$

   Now we analyse the cross-price elasticity that is, the responsiveness of the variation of an exogenous parameter for the asset $i$ to change in the price of another asset, $ceteris$ $paribus$. We investigate how the effect that an exogenous parameter attached to the asset $i$ can have on the price $P_j$, where $j\neq i$. For example, if we consider the price $P_j$ with regards to variations of the exogenous parameter $\lambda_{i}$, we denote $P_j=P_j(\lambda_{i}).$ Surprisingly, there is a constant value depending on the $j^{th}$ parameter of the model where $\lambda_{i}$ vanishes in the expression $P_j=P_j(\lambda_{i}).$
We are now in the position to prove the following result which provides a nice characterization of the equilibrium effect on price sensitivities.

\begin{prop}

There exists a constant $\widehat{\lambda}(j)$ depending on the exogenous parameters of the model and the equilibrium steady state such that:
\begin{enumerate}
\item the equilibrium price $P_j(\lambda_{i})$ is increasing when $\widehat{\lambda}(j)< 0$;
\item the equilibrium price $P_j(\lambda_{i})$, with $j\neq i$, is constant when $\lambda_{j}$ is equal to $\widehat{\lambda}(j)>0$, that is $P_j(\lambda_{i})$ is invariant in $\lambda_i$;
\item the equilibrium price $P_j(\lambda_{i})$ are increasing for $\lambda_{i}> \widehat{\lambda}(j)$ and are decreasing for $0< \lambda_j \leq \widehat{\lambda}(j).$
\end{enumerate}




\end{prop}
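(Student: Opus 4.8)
The plan is to read everything off the closed form of Theorem 4.1 and to notice that, for a fixed $j$, the parameter $\lambda_i$ with $i\neq j$ enters $P_j$ through a single channel. Writing the formula as
\[
P_j \;=\; \underbrace{\left(\frac{\gamma_{dj}}{(1-q)a_j\,\Gamma(j,r)}-\Lambda(j,r)\right)}_{=:\,C_j}\;w \;+\; \underbrace{\frac{\delta_{hj}}{r}-q\,\Omega(j,r)\!\left(1+\left(1-r+\tfrac{\gamma_{dj}}{q}\right)\!\left(1+\tfrac{\widetilde{\gamma}_{dj}}{(1-q)a_j}\right)\right)}_{=:\,D_j},
\]
with $w=\dfrac{\sum_{k\in\mathcal{I}}\widetilde{\gamma}_{uk}\,\Omega(k,r)}{r+\sum_{k\in\mathcal{I}}\widetilde{\gamma}_{uk}\bigl(1-\Lambda(k,r)\bigr)}$, I would first verify that both $C_j$ and $D_j$ involve only the asset-$j$ data $a_j,d_j,\Gamma(j,r),\Lambda(j,r),\Omega(j,r)$ and are therefore free of $\lambda_i$. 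Consequently the entire $\lambda_i$-dependence of $P_j$ sits in the single summand $\widetilde{\gamma}_{ui}\Omega(i,r)$ of the numerator of $w$ and the single summand $\widetilde{\gamma}_{ui}\bigl(1-\Lambda(i,r)\bigr)$ of its denominator, so that $\partial P_j/\partial\lambda_i=C_j\,\partial w/\partial\lambda_i$.

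Second, I would pin down the sign of $\partial w/\partial\lambda_i$. Substituting $a_i=\lambda_i\mu(li,o)$ and $d_i=\lambda_i\mu(hi,n)$ into the definitions of $\Gamma,\Psi,\Lambda,\Omega$ (treating the equilibrium masses as the fixed inputs coming from the unique steady state, which is the ceteris paribus convention) turns $\Omega(i,r)$ and $1-\Lambda(i,r)$ into explicit rational functions of $\lambda_i$. The computation is exactly of the type already recorded just before the statement, where $q d_i\Gamma(i,r)$ collapses to $(\gamma_i+r)+\frac{q}{1-q}\frac{\mu(hi,n)}{\mu(li,o)}(\widetilde{\gamma}_{di}+r)+\frac{(\gamma_i+r)(\widetilde{\gamma}_{di}+r)}{(1-q)\lambda_i\mu(li,o)}$; this makes $\Omega(i,r)$ monotone in $\lambda_i$ and yields a tractable monotone expression for $1-\Lambda(i,r)$. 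Feeding these into the quotient rule gives $\partial w/\partial\lambda_i=\frac{\widetilde{\gamma}_{ui}}{D^2}\bigl(\Omega'(i,r)\,D+N\,\Lambda'(i,r)\bigr)$ with $N,D>0$, and I would check, using $\Omega(i,r)>0$ and $\Lambda(i,r)\in(0,1)$, that this keeps a single constant sign for all $\lambda_i>0$. This constant sign is what decouples the monotonicity question from $\lambda_i$ and hands it entirely to $C_j$.

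Third, the threshold comes from the vanishing of $C_j$. Since the only $\lambda_i$-dependence is through $w$, the price $P_j$ is invariant in $\lambda_i$ precisely when $C_j=0$; I would therefore \emph{define} $\widehat{\lambda}(j)$ as the value of $\lambda_j$ solving $C_j=0$, i.e. $\frac{\gamma_{dj}}{(1-q)a_j}=\Psi(j,r)$ after clearing $\Gamma(j,r)$, an equation in $\lambda_j$ through $a_j=\lambda_j\mu(lj,o)$ and $d_j=\lambda_j\mu(hj,n)$. By construction $\widehat{\lambda}(j)$ depends only on the asset-$j$ parameters and the equilibrium masses, and at $\lambda_j=\widehat{\lambda}(j)$ one has $P_j=D_j$, independent of $\lambda_i$, which is conclusion (2). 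It then remains to locate the sign of $C_j$: I would show $C_j$ is monotone in $\lambda_j$ (equivalently, that $\lambda_j\mapsto(1-q)a_j\Gamma(j,r)$ and $\lambda_j\mapsto\Lambda(j,r)$ move in controllable, opposite directions), so that $C_j$ changes sign exactly once at $\widehat{\lambda}(j)$ and keeps one sign on $(0,\infty)$ when this root is non-positive. Conclusions (1) and (3) follow by reading off $\mathrm{sign}(\partial P_j/\partial\lambda_i)=\mathrm{sign}(C_j)\cdot\mathrm{sign}(\partial w/\partial\lambda_i)$: when $\widehat{\lambda}(j)<0$ the sign of $C_j$ is constant on the feasible range, so $P_j$ is monotone in $\lambda_i$, whereas when $\widehat{\lambda}(j)>0$ the sign of $\partial P_j/\partial\lambda_i$ reverses as $\lambda_j$ crosses $\widehat{\lambda}(j)$.

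I expect the delicate points to be twofold. First, establishing once and for all that $\partial w/\partial\lambda_i$ keeps a single sign requires controlling the competition between the increasing numerator $N$ and the moving denominator $D$; one genuinely needs the explicit rational forms and the bounds $\Omega(i,r)>0$, $\Lambda(i,r)\in(0,1)$ rather than a heuristic. Second, one must prove that $C_j(\lambda_j)=0$ has a unique root and then determine on which side $C_j$ is positive, so that the three cases are exhaustive and the stated increasing/decreasing directions are correctly matched with $\mathrm{sign}(\partial w/\partial\lambda_i)$; this sign bookkeeping, and the boundary behavior of $C_j$ as $\lambda_j\to0^+$ and $\lambda_j\to\infty$ (using the limits $\widehat{\Omega}(j,r)$ and $\widehat{\Lambda}(j,r)$ already identified), is the technical heart. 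A secondary subtlety worth stating explicitly is the comparative-statics convention: since the steady-state masses are themselves functions of all the $\lambda$'s, the analysis is performed on the explicit $\lambda_i$-dependence of the pricing formula with the equilibrium proportions held fixed, which is precisely the setting in which $\widehat{\lambda}(j)$ is well defined.
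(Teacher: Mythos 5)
Your proposal follows essentially the same route as the paper: the decomposition $P_j=C_j\,w+D_j$ with all $\lambda_i$-dependence confined to $w$ (the paper's $\Theta$), the explicit rational forms $\Omega(i,r)=\frac{\delta_{di}\lambda_i}{a\lambda_i+b}$ and $1-\Lambda(i,r)=\frac{b_1\lambda_i+b_0}{a\lambda_i+b}$ yielding $\partial w/\partial\lambda_i>0$ via the positivity of $ab_0-b_1b$, and the definition of $\widehat{\lambda}(j)$ as the root in $\lambda_j$ of $C_j=0$ (the paper's $A_j=0$) are exactly the steps the authors carry out. The delicate points you flag are precisely the ones the paper resolves by direct computation, so the plan is correct and not materially different.
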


The new parameter $\widehat{\lambda}(j)$ depends on the exogenous parameters of the model.
Since the transactions in OTC markets are done online or by phone, the network effect is almost inevitable. The parameters that measure the network are the frequencies $\lambda_i$.
For every asset with liquid payoff, it will produce the network effect, that is, many people will converge on the bargain and thus increase the use of the asset. The facility to access the network can decrease or increase the bargaining power of some investors. The investors who are far geographically, such as in certain areas of Africa or Asia, and who do not have sophisticated materials and high-quality networks are disadvantaged compared to others located for instance in New York. The growth of the market and the presence of multiple assets with liquidity payoffs can cause the $network$ $congestion$.

For illustration, we consider a market with two assets. We present in Figure 1 the behavior of the price of asset 1 as a function of $\lambda_2$ which is the encounter frequency for exchanging item 2, for fixed values of $\lambda_1.$ We notice that $P_1$ is a decreasing function of $\lambda_2$. When the frequency of meetings to exchange asset 1 is sufficiently large ( $\lambda_1$=6250) the decreasing of $P_1$, when $\lambda_2$ increases, accelerates. In general, when $\lambda_1$ and $\lambda_2$ tend to infinity, the price of each asset tends to a value that can be explicitly calculated using Proposition 7.1, thus $\widehat{P}_1=49.0824$ and $\widehat{P}_2=57.2058$. Another thing to note is that, although $P_1$ is a monotone function of $\lambda_2$, it is not a monotone function of $\lambda_1.$ However, for sufficiently large values of $\lambda_1$, $P_1$ becomes a monotone function of $\lambda_1$, which explains the fact that $P_1$ converges asymptotically when the intensities of meetings tend to infinity. A lesson to be kept is that the availability of information reduces prices if the quantity of information is sufficiently large. On the other hand, the intensities of meetings for asset 1 can be increasing, but with $P_1$ being decreasing. In Figure 1, for example, we have ($5> 0.002$) but the corresponding values of the price $P_1$ are decreasing. This observation allows us to argue that the intensities of meetings are not the only parameters that influence the equilibrium prices.
\noindent

Figure 2 illustrates the behavior of the price of asset 1 as a function of $\lambda_1,$ which is the encounter frequency for exchanging item 1, for fixed values of $\lambda_2.$ We notice that $P_1$ is an increasing function of $\lambda_1$ in this particular example. In general, the price $P_1$ is not a monotone function of $\lambda_1.$ Its monotonicity depends on the sign of  $\frac{q\gamma_{d1}\mu(h1,n)}{(1-q)\delta_{d1}\mu(l1,o)}\Omega(1,r)-\Lambda(1,r),$  which itself depends on $\lambda_1.$ When the intensities of meetings to exchange asset 2  are sufficiently large ($\lambda_2$=250, $\lambda_2$=1250, $\lambda_2$=6250) the increase of $P_1$, when $\lambda_1$ increases, decelerates. In this case, prices are very close and seem to converge towards a single price. Surprisingly, we note that when the intensities of meetings to exchange asset 2  are sufficiently small ($\lambda_2$=0.002), prices are also close to previous ones.  But for a random intensity of meetings to exchange asset 2, ($\lambda_2$=5), the price $P_1$ has a high growth rate and is greater than the previous prices. So, we do not have a monotonic behavior of equilibrium prices as a function of the frequency of meetings, but rather a chaotic price movement over the intensities of meetings.

\begin{figure}
\centering
\begin{tikzpicture}
\begin{axis}[width=\textwidth, 
height=0.67\textwidth, 
axis x line*=bottom, 
axis y line*=left, 
grid=none, 
xmin=0, 
xmax=100, 
every axis x label/.style={
    at={(ticklabel* cs:1.05)},
    anchor=west, 
},
every axis y label/.style={
    at={(ticklabel* cs:1.05)},
    anchor=south, 
},
xlabel=$\lambda_2$, 
ylabel=$P_1(\lambda_2)$, 
legend pos=south west] 

\addplot[thick] [domain=0:100, samples=100]{(-16.4316*x-0.0017)/(52.254*x+8030.8289)+49.9307};
\addplot[blue,thick] [domain=0:100, samples=100]{(-16.3878*x-10.7315)/(52.254*x+8030.8289)+49.8757};
\addplot[red,thick] [domain=0:100, samples=100]{(-15.8534*x-145.9938)/(52.1104*x+8010.3218)+49.97};
\addplot[green,thick] [domain=0:100, samples=100]{(-15.8534*x-193.7486)/(52.1104*x+8010.3218)+49.9927};
\addplot[yellow,thick] [domain=0:100, samples=100]{(-73.1477*x-978.2355)/(51.9668*x+7989.8148)+49.997};

\legend{$\lambda_1=0.002$,$\lambda_1=5$, $\lambda_1=250$, $\lambda_1=1250$, $\lambda_1=6250$};

\end{axis}

\end{tikzpicture}
\caption{Variation of equilibrium prices $P_1(\lambda_2)$}

\end{figure}
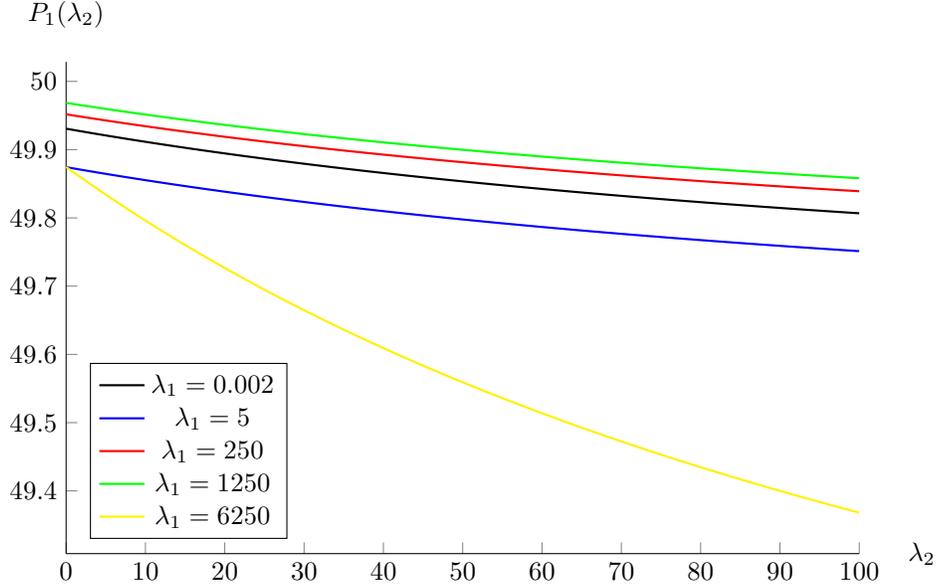




\begin{figure}
\centering
\begin{tikzpicture}
\begin{axis}[width=\textwidth, 
height=0.67\textwidth, 
axis x line*=bottom, 
axis y line*=left, 
grid=none, 
xmin=0, 
xmax=100, 
every axis x label/.style={
    at={(ticklabel* cs:1.05)},
    anchor=west, 
},
every axis y label/.style={
    at={(ticklabel* cs:1.05)},
    anchor=south, 
},
xlabel=$\lambda_1$, 
ylabel=$P_1(\lambda_1)$, 
legend pos=south east] 

\addplot[blue,thick] [domain=0:100, samples=100]{(-112341.3974*x^2-14384932.7307*x)/((6566.5534*x+722960.7211)*(639.6522*x+70916.1363))+
(-11.907*x-50086.8123)/(6566.5534*x+722960.7211) + 50};
\addplot[thick,red] [domain=0:100, samples=100]{(-112341.3974*x^2-14384937.7406*x-641.5003)/((6566.5534*x+722960.7211)*(4394.3389*x+70713.7073))+
(-11.907*x-50086.8123)/(6566.5534*x+722960.7211) + 50};
\addplot[thick,green] [domain=0:100, samples=100]{(-112341.3974*x^2-14385035.1553*x-13115.1184)/((6566.5534*x+722960.7211)*(610.7594*x+67735.1091))+
(-11.907*x-50086.8123)/(6566.5534*x+722960.7211) + 50};
\addplot[thick, dash pattern=on 4pt off 1pt on 4pt off 4pt] [domain=0:100, samples=100]{(-112341.3974*x^2-14384934.0667*x-171.0667)/((6566.5534*x+722960.7211)*(598.9396*x+66433.7798))+
(-11.907*x-50086.8123)/(6566.5534*x+722960.7211) + 50};

 \addplot[thick,brown]
 [domain=0:100, samples=100]{(-112341.3974*x^2-14385090.5981*x-20171.6223)/((6566.5534*x+722960.7211)*(595.6563*x+66028.9218))+
(-11.907*x-50086.8123)/(6566.5534*x+722960.7211) + 50};

\legend{$\lambda_2=0.002$,$\lambda_2=5$, $\lambda_2=250$, $\lambda_2=1250$, $\lambda_2=6250$};

\end{axis}
\end{tikzpicture}
\caption{Variation of equilibrium prices $P_1(\lambda_1)$}
\end{figure}
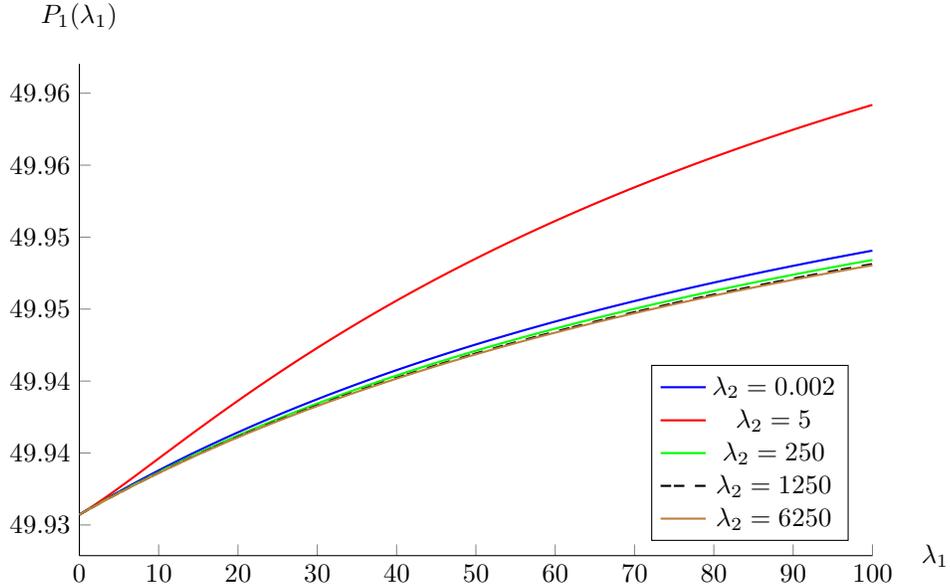

\newpage

\section{Conclusion}

This paper generalizes the model of OTC market due to  Duffie-G\^{a}rleanu-Pedersen \cite{Duffie2005}. Our market is segmented and involves multiple assets, and we analyze its long-run prediction. We show that the
 unique steady state of any segmented OTC market is asymptotically stable. In addition, we find a closed form solution for the equilibrium price of each asset. Our analysis shows how prices respond to the level of interactions between agents and the other exogenous parameters of the model. We also obtain the rate of change of the equilibrium prices when some exogenous parameters of the model grow simultaneously to infinity. In addition to the substantive contributions, we contribute methodologically by developing new tools to study segmented dark markets with multiple assets.


\vspace{1cm}

\section*{Acknowledgements}
The first author is supported in part by a team grant from Fonds de Recherche du Qu\'ebec - Nature et Technologies (FRQNT grant no. 180362).

\newpage





\appendix

\section{Asymptotic Stability for ODE systems}\label{app}

We consider an autonomous differential system $$\dot{x}(t)=F(x(t)),$$ with $F:[0,\infty )\times D\longrightarrow \mathbb{R}^{n}$ piecewise continuous in $t$ and locally Lypschitz in $x$, where $D$ is a domain containing the solution $x^{0}$.

The linear part of $F$ at the point $x^{0}$, denoted $A=Df(x^{0})$ is the Jacobian matrix of $F$ at the point $x^{0}.$ Since the function is differentiable and its differential is continuous, Taylor's Theorem for functions of several variables says that $f(x)=Df(x^{0})(x-x^{0})+ \theta(x)$, where $\theta$ is the function that is small near $x^{0}$ in the sense that $\lim\limits_{x \rightarrow x^{0}} \frac{\theta(x)}{||x-x^{0}||}=0.$
For a differential equation $\dot{x}(t)=F(x(t))$, as before, let $x^{\ast}$ denote a steady state and let $A=Df(x^{\ast})$ denote the Jacobian matrix of $F$ at the point $x^{\ast}.$ Let $(\lambda_{i})$ be the family of all eigenvalues of $A$. We have the following stability theorems (See \cite{Braun1993}, p.378).

\begin{theo}
A steady state equilibrium $x^{\ast}$ of the above system is stable if and only if both of the following conditions hold:

\begin{itemize}
\item all eigenvalues of $A$ have non positive real part, that is $\mathfrak{R}(\lambda_{i})\leq 0.$

\item for every eigenvalue of real part $\mathfrak{R}(\lambda_{i})= 0$ and algebraic multiplicity $\eta_i\geq 1,$ we have
 Rank$(A-\lambda_{i}I)=\eta_{i}$

\end{itemize}
\end{theo}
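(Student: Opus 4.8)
The plan is to read this as the classical characterization of Lyapunov stability for the linearized system $\dot{y}=Ay$ obtained after the change of variable $y=x-x^{\ast}$, following \cite{Braun1993}. Since the system is linear, its solutions are $y(t)=e^{At}y(0)$, and stability of the equilibrium at the origin is equivalent to boundedness of the propagator $e^{At}$ on $[0,\infty)$: if $e^{At}$ is bounded, then $\|y(t)\|\le \|e^{At}\|\,\|y(0)\|$ supplies the $\epsilon$--$\delta$ estimate of the stability definition, while an unbounded $e^{At}$ produces initial data of arbitrarily small norm whose trajectory escapes. First I would therefore reduce the claim to the boundedness of $e^{At}$.

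To analyze $e^{At}$ I would pass to the Jordan canonical form $A=PJP^{-1}$, so that $e^{At}=Pe^{Jt}P^{-1}$; since $P$ is a fixed invertible matrix, $e^{At}$ is bounded if and only if $e^{Jt}$ is, and the latter decomposes blockwise. For a single Jordan block $J_m(\lambda)$ of size $m$ a direct computation gives $e^{J_m(\lambda)t}=e^{\lambda t}N(t)$, where $N(t)$ is upper triangular with $t^{k}/k!$ on its $k$-th superdiagonal, hence has polynomial entries of degree up to $m-1$. I would then split on the sign of $\mathfrak{R}(\lambda)$: when $\mathfrak{R}(\lambda)<0$ the factor $e^{\lambda t}$ decays exponentially and dominates every polynomial, so the block tends to $0$; when $\mathfrak{R}(\lambda)>0$ the block is unbounded; and when $\mathfrak{R}(\lambda)=0$ one has $|e^{\lambda t}|=1$, so the block stays bounded exactly when its polynomial entries are constant, i.e. exactly when $m=1$.

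Collecting these cases shows that $e^{At}$ is bounded, and hence the equilibrium is stable, if and only if every eigenvalue satisfies $\mathfrak{R}(\lambda_i)\le 0$ and every eigenvalue on the imaginary axis is semisimple, meaning all of its Jordan blocks have size one. The final step is to rewrite semisimplicity as the rank condition of the statement: the number of Jordan blocks attached to $\lambda_i$ equals the geometric multiplicity $\dim\ker(A-\lambda_i I)=n-\mathrm{Rank}(A-\lambda_i I)$, and all of these blocks are trivial precisely when this quantity equals the algebraic multiplicity $\eta_i$, i.e. when $\mathrm{Rank}(A-\lambda_i I)=n-\eta_i$, which is exactly the equality of geometric and algebraic multiplicities recorded in the theorem.

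I expect the main obstacle to be the critical case $\mathfrak{R}(\lambda_i)=0$. Away from the imaginary axis the exponential factor alone controls the asymptotics and the conclusion is immediate, but on the axis stability is governed entirely by the fine structure of the Jordan form, so the argument must pin down that boundedness fails the instant a single nontrivial Jordan block is present. Carrying out the clean bookkeeping that identifies \emph{all blocks trivial} with \emph{semisimple} and with the rank identity is the delicate part; the remaining estimates comparing exponential decay against polynomial growth are routine.
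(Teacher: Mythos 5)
Your proof is correct, and it is the standard argument: the paper itself offers no proof of this statement --- it is quoted from Braun \cite{Braun1993} as a known fact --- so the Jordan-form analysis of $e^{At}$ that you carry out (exponential decay off the imaginary axis, polynomial growth from nontrivial Jordan blocks on it, reduction of stability to boundedness of the propagator) is precisely the argument the citation points to. Two remarks on how your derivation interacts with the statement as printed. First, your bookkeeping yields $\mathrm{Rank}(A-\lambda_i I)=n-\eta_i$ as the condition for the eigenvalue $\lambda_i$ to be semisimple, whereas the theorem as written in the paper asserts $\mathrm{Rank}(A-\lambda_i I)=\eta_i$; these agree only when $\eta_i=n-\eta_i$, so the printed rank condition is a typo and your version (equivalently, geometric multiplicity equals algebraic multiplicity) is the correct one --- you should say this explicitly rather than claiming your formula is ``exactly'' what the theorem records. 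Second, the theorem is stated for a steady state of the nonlinear system $\dot{x}=F(x)$, but the ``if and only if'' in terms of the eigenvalues of $A=Df(x^{\ast})$ is only true for the linear system $\dot{y}=Ay$: when some $\mathfrak{R}(\lambda_i)=0$ the stability of the nonlinear equilibrium is not determined by the linearization at all (center-manifold situations can go either way), so your decision to read the claim as a statement about the linearized system is not merely a convenience but the only reading under which the equivalence holds. With those two caveats made explicit, the argument is complete; the step you flag as delicate --- that a single Jordan block of size $m\ge 2$ on the imaginary axis forces $\|e^{At}\|\to\infty$ and hence instability via initial data $\delta v_k$ of arbitrarily small norm --- is handled correctly by your blockwise computation.
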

We are in position to give a very important result which characterizes the asymptotic stability.

\begin{theo}
The steady state $x^{\ast}$ of the above system is asymptotically stable if and only if
all eigenvalues of $A$ have a negative real part, that is $\mathfrak{R}(\lambda_{i})< 0,$ for all $i$.

\end{theo}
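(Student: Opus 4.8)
The plan is to reduce the statement to the theory of the constant-coefficient linear system $\dot{y} = Ay$, whose flow is given explicitly by the matrix exponential $y(t) = e^{At}y(0)$, and then to transfer the conclusion back to the steady state $x^{\ast}$ of the nonlinear system through the linearization $F(x) = A(x - x^{\ast}) + \theta(x)$ recorded above. Writing $y = x - x^{\ast}$, the origin is a steady state of $\dot{y} = Ay + \theta(x^{\ast} + y)$, and the whole question is whether the linear part $A = Df(x^{\ast})$ controls the dynamics near $0$.

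First I would treat the sufficiency (\emph{if}) direction. Assuming every eigenvalue satisfies $\mathfrak{R}(\lambda_i) < 0$, I would pass to the Jordan canonical form of $A$: each entry of $e^{At}$ is a finite sum of terms $t^{k} e^{\lambda_i t}$ with $k$ less than the size of the corresponding Jordan block. Since $\mathfrak{R}(\lambda_i) < 0$ for all $i$, each such term decays exponentially, so there are constants $C > 0$ and $\alpha > 0$ (any $\alpha < \min_i |\mathfrak{R}(\lambda_i)|$) with $\|e^{At}\| \leq C e^{-\alpha t}$. For the linear flow this immediately yields both stability (take $\delta = \epsilon / C$ in the definition) and $\lim_{t\to\infty}\|y(t)\| = 0$, hence asymptotic stability. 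To obtain the same conclusion for the nonlinear steady state I would invoke Lyapunov's indirect method: solve the Lyapunov equation $A^{\top}P + PA = -Q$ for a positive definite $P$ (solvable precisely because $A$ is Hurwitz, with $Q$ any positive definite matrix), use $V(y) = y^{\top}Py$ as a strict Lyapunov function, and absorb the higher-order remainder $\theta$ on a sufficiently small ball using $\lim_{x\to x^{\ast}}\theta(x)/\|x - x^{\ast}\| = 0$.

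Next I would handle the necessity (\emph{only if}) direction by contraposition: suppose some eigenvalue $\lambda_k$ has $\mathfrak{R}(\lambda_k) \geq 0$. If $\mathfrak{R}(\lambda_k) > 0$, choosing $y(0)$ along a (real or complex) eigenvector produces a solution whose norm grows like $e^{\mathfrak{R}(\lambda_k)t}$, so the origin is unstable and a fortiori not asymptotically stable. If $\mathfrak{R}(\lambda_k) = 0$, the associated solution is a bounded oscillation for purely imaginary $\lambda_k$, or polynomially growing if the Jordan block is nontrivial, but in no case does it return to the origin, so asymptotic stability fails. This establishes the contrapositive at the level of the linear flow.

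The main obstacle is the passage between the linear flow and the nonlinear steady state. The sufficiency direction genuinely requires the Lyapunov-function argument above, since $\theta$ must be dominated on a neighborhood; the delicate point is choosing the ball small enough that $\dot{V}$ remains negative definite, which is exactly where the estimate on $\theta$ enters. I would also flag the standard caveat that the necessity implication is sharp only for the linear flow (the linearization may carry eigenvalues on the imaginary axis while the nonlinear steady state is still asymptotically stable), so the clean equivalence is to be read at the level of the matrix $A$ — which is precisely how it is used in the proof of Theorem 3.1, where $A$ is shown to be a stable (diagonally dominant) matrix via the Hawkins--Simon and McKenzie results.
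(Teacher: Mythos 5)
Your proposal is correct in substance, but note that the paper does not actually prove this statement: it is quoted as a known result with a citation to Braun's textbook (p.~378), so any proof you supply is necessarily a different route from the paper's. Your sufficiency argument is the standard and correct one --- the Jordan-form estimate $\|e^{At}\|\leq Ce^{-\alpha t}$ for the linear flow, followed by Lyapunov's indirect method (solving $A^{\top}P+PA=-Q$ and absorbing the remainder $\theta$ on a small ball using $\lim_{x\to x^{\ast}}\theta(x)/\|x-x^{\ast}\|=0$) --- and this is the only direction the paper ever uses, namely in the proof of Theorem 3.1, where the Hawkins--Simon and McKenzie machinery is deployed precisely to verify that $J(x^{\ast})$ is a stable matrix. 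Your caveat on the necessity direction is not merely a stylistic flag but a genuine correction to the statement as written: for the nonlinear system the ``only if'' is false in general (e.g.\ $\dot{x}=-x^{3}$ has an asymptotically stable origin while $A=Df(0)=0$ has a zero eigenvalue), so the equivalence holds only at the level of the linear system $\dot{y}=Ay$, where your contrapositive argument (exponential growth for $\mathfrak{R}(\lambda_k)>0$, non-decaying bounded or polynomially growing modes for $\mathfrak{R}(\lambda_k)=0$) is complete. In short: you prove strictly more than the paper records, you prove the usable direction correctly, and you correctly identify that the converse direction, as stated for the nonlinear steady state, cannot be proved because it is not true without a hyperbolicity hypothesis.
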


\section{ Diagonally dominant matrices}\label{app}

\begin{theo} (Hadamard-Levy-Desplanques)
If an $n\times n$ matrix $A$ is d.d. then it is non-singular.
\end{theo}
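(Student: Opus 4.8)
The plan is to reduce the weighted (generalized) diagonal dominance hypothesis to the classical case of strict diagonal dominance, and then to exclude a zero eigenvalue directly. First I would invoke the scaling equivalence already recorded in the text: since $A$ is d.d., there is a positive diagonal matrix $D=\mathrm{diag}(d_1,\dots,d_n)$ such that $B:=DA$ is strictly d.d., i.e. $|b_{ii}|>\sum_{j\neq i}|b_{ij}|$ for every $i$. Because $\det B=(\det D)(\det A)$ and $\det D=d_1\cdots d_n>0$, we have $\det A\neq 0$ if and only if $\det B\neq 0$. Thus non-singularity of $A$ is equivalent to non-singularity of $B$, and it suffices to prove the theorem under the stricter Hadamard property.

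The heart of the argument is then the classical maximal-modulus-index trick. I would argue by contradiction: suppose $B$ were singular, so that $Bx=0$ for some $x\neq 0$. Choose an index $k$ attaining the maximum modulus $|x_k|=\max_i|x_i|>0$. The $k$-th equation of $Bx=0$ reads $b_{kk}x_k=-\sum_{j\neq k}b_{kj}x_j$, and taking moduli with the triangle inequality together with $|x_j|\leq|x_k|$ yields $|b_{kk}|\,|x_k|\leq |x_k|\sum_{j\neq k}|b_{kj}|$. Dividing by $|x_k|>0$ gives $|b_{kk}|\leq\sum_{j\neq k}|b_{kj}|$, contradicting the strict diagonal dominance of $B$. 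Hence $B$, and therefore $A$, is non-singular. Equivalently, one may follow the Gershgorin route hinted at in the text: every eigenvalue $\lambda$ of $B$ lies in some disc $\{z:|z-b_{ii}|\leq\sum_{j\neq i}|b_{ij}|\}$, and strict dominance forces each such disc to exclude the origin, so $0$ is not an eigenvalue and $B$ is invertible.

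There is no genuine obstacle here; the single point that needs care is the bookkeeping in the diagonal scaling, namely choosing the correct side on which $D$ multiplies so that the $d_j$-weighted dominance of $A$ becomes the unweighted strict dominance of $DA$, and confirming that passing from $A$ to $DA$ leaves the determinant non-zero. Once that reduction is in place, the maximal-modulus-index estimate is the decisive and entirely routine step.
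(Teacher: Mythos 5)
Your proof is correct. The paper itself does not prove this statement: it is quoted as the classical Hadamard--Levy--Desplanques theorem, with the remark that it is a particular case of Gerschgorin's disk theorem (citing Varga) and that McKenzie proved the generalization to d.d.\ matrices; your maximal-modulus-index argument is the standard proof of exactly that chain (reduce weighted dominance to strict dominance by a positive diagonal scaling, then exclude a zero eigenvalue), so it supplies precisely the missing details and is consistent with the route the paper gestures at. One small bookkeeping point, which you yourself flag: with the paper's row-dominance convention $d_i|a_{ii}|>\sum_{j\neq i}d_j|a_{ij}|$, the scaling that produces an unweighted strictly dominant matrix is $B=AD$ (weights on the columns), not $B=DA$ as the paper's text asserts; since $\det(AD)=\det(A)\det(D)$ with $\det(D)>0$, your determinant reduction is unaffected, and the rest of the argument is airtight.
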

This theorem is a a particular case of the Gerschgorin theorem.
We then have the closely related result.
\begin{theo} ( McKenzie)
If an $n\times n$ matrix $A$ has a generalized diagonal dominant that is negative, then all its eigenvalues have negative real parts.
\end{theo}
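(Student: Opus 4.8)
The plan is to prove McKenzie's theorem by reducing the generalized (weighted) diagonal dominance to ordinary strict diagonal dominance through a diagonal \emph{similarity} transformation, and then reading off the location of the eigenvalues from Gershgorin's disk theorem, which the excerpt has already invoked.

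First I would fix positive weights $d_1,\dots,d_n$ witnessing the row dominance of $A$, so that $d_i|a_{ii}| > \sum_{j\neq i} d_j|a_{ij}|$ for every $i$, and record that the hypothesis of a \emph{negative} dominant diagonal means $a_{ii}<0$ for each $i$. Setting $D=\mathrm{diag}(d_1,\dots,d_n)$, I would pass to the conjugate matrix $B=D^{-1}AD$. Because $B$ is similar to $A$, it has the same characteristic polynomial, hence the same eigenvalues; and because conjugation by a diagonal matrix leaves the diagonal untouched, $b_{ii}=a_{ii}<0$ for all $i$.

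Next I would verify that $B$ is \emph{strictly} row diagonally dominant. Its entries are $b_{ij}=(d_j/d_i)a_{ij}$, so the off-diagonal row sums satisfy $\sum_{j\neq i}|b_{ij}| = d_i^{-1}\sum_{j\neq i} d_j|a_{ij}| < d_i^{-1}(d_i|a_{ii}|) = |a_{ii}| = |b_{ii}|$, which is exactly the strict-dominance inequality of Hadamard. Applying Gershgorin's disk theorem to $B$ (see \cite{Varga2011}) then places every eigenvalue $\lambda$ of $B$, equivalently of $A$, in some disk $|\lambda-a_{ii}|\le \sum_{j\neq i}|b_{ij}| < |a_{ii}| = -a_{ii}$.

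Finally I would convert this disk estimate into a half-plane statement. Since the center $a_{ii}$ lies on the negative real axis and the radius is strictly smaller than its distance $-a_{ii}$ to the origin, the whole disk sits in the open left half-plane: from $\mathfrak{R}(\lambda)-a_{ii}\le|\lambda-a_{ii}|<-a_{ii}$ we obtain $\mathfrak{R}(\lambda)<0$. As this holds for every eigenvalue, all eigenvalues of $A$ have negative real part; the column-dominant case follows at once since $A$ and $A^{\top}$ share eigenvalues. The computations here are routine, and the only genuinely delicate point is the reduction step: one must conjugate by $D^{-1}AD$, a similarity that preserves both the spectrum and the diagonal, rather than merely scale by $DA$ or $AD$, and then notice that the negativity of the diagonal together with strict dominance is precisely what pins each Gershgorin disk strictly inside $\{\mathfrak{R}(z)<0\}$.
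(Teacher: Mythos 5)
Your proof is correct, and you should know that the paper itself offers no proof of this statement: Theorem B.2 is imported as a black box from McKenzie's 1959 article and is only \emph{applied} (in the proof of Lemma C.1) once the Jacobian has been shown to be d.d.\ with negative diagonal. Your argument --- conjugate by $D^{-1}AD$ so that the weighted row dominance $d_i|a_{ii}|>\sum_{j\neq i}d_j|a_{ij}|$ becomes Hadamard's strict dominance without disturbing either the spectrum or the diagonal, then apply Gershgorin to trap each eigenvalue in a disk centred at $a_{ii}<0$ of radius strictly less than $-a_{ii}$, so that $\mathfrak{R}(\lambda)-a_{ii}\le|\lambda-a_{ii}|<-a_{ii}$ forces $\mathfrak{R}(\lambda)<0$ --- is the standard proof and is exactly in the spirit of the paper's own remark that Hadamard's nonsingularity theorem is a direct consequence of Gershgorin's disk theorem; you simply push that observation one step further to get eigenvalue location rather than mere invertibility. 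You are also right to insist on the similarity $D^{-1}AD$ rather than a one-sided scaling: the paper's reduction ``$A$ is d.d.\ iff $DA$ is strictly d.d.'' changes the spectrum, so it suffices for nonsingularity-type conclusions but not for the half-plane claim, which genuinely requires the two-sided conjugation you use. The only interpretive step is reading ``a generalized dominant diagonal that is negative'' as ``$a_{ii}<0$ for all $i$,'' which is how the paper itself uses the theorem; with that reading every step of your argument, including the reduction of the column-dominant case to $A^{\top}$, checks out.
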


As we will see, the problem of showing that a matrix is d.d. can oftentimes be reduced to finding a positive solution to a linear system, it will be useful to recall the Hawkins and Simon condition.

\begin{theo} (Hawkins-Simon)
Let $A$ be an $n\times n$ matrix with non positive off-diagonal elements. Then the following statements are equivalent:
\begin{description}

\item(i) the inequality $AX > 0$ has a positive solution,
\item(ii) $det\left(
\begin{array}{ccc}
   a_{11} & \cdots & a_{1p} \\
   \vdots & \ddots & \vdots \\
   a_{p1} & \cdots & a_{pp}
\end{array}
\right)
> 0$ for all $p=1,2,\cdots,n$.

\end{description}
\end{theo}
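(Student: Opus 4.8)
The plan is to establish the two implications (i)$\Rightarrow$(ii) and (ii)$\Rightarrow$(i) separately, leaning on the diagonal-dominance machinery already recorded as Theorems B.1 and B.2. A preliminary observation drives the first implication: if $X=(x_1,\dots,x_n)>0$ solves $AX>0$, then for every $p\le n$ its truncation $X^{(p)}=(x_1,\dots,x_p)$ solves $A_pX^{(p)}>0$, where $A_p$ denotes the leading $p\times p$ principal submatrix. Indeed, for $k\le p$ one writes $\sum_{j=1}^{p}a_{kj}x_j=(AX)_k-\sum_{j>p}a_{kj}x_j$; since $k<j$ forces $a_{kj}\le0$ while $x_j>0$, the subtracted sum is $\le0$, so $\sum_{j=1}^{p}a_{kj}x_j\ge (AX)_k>0$. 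Thus the hypothesis in (i) is inherited by every leading principal submatrix, and it will suffice to show that each such submatrix has a positive determinant.

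For (i)$\Rightarrow$(ii), I would first reinterpret $AX>0$ with $X>0$ as a diagonal-dominance statement. Expanding $(AX)_i=a_{ii}x_i+\sum_{j\ne i}a_{ij}x_j>0$ and using $a_{ij}\le0$, $x_j>0$ for $j\ne i$, gives $a_{ii}x_i>\sum_{j\ne i}|a_{ij}|x_j\ge0$; in particular each $a_{ii}>0$ and $x_i|a_{ii}|>\sum_{j\ne i}x_j|a_{ij}|$. Hence $A$ is diagonally dominant with weights $d_j=x_j$, and $-A$ is diagonally dominant with the \emph{negative} diagonal $-a_{ii}<0$. By Theorem B.2 (McKenzie), all eigenvalues of $-A$ have negative real part, so all eigenvalues of $A$ have positive real part. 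Since $A$ is real, its non-real eigenvalues occur in conjugate pairs whose product is a positive modulus, while its real eigenvalues are positive; therefore $\det A>0$. Applying this to each leading submatrix $A_p$ (legitimate by the inheritance observation) yields $D_p=\det A_p>0$ for all $p$, which is exactly (ii).

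For (ii)$\Rightarrow$(i), I would argue by induction on $n$ that (ii) forces $A$ to be invertible with nonnegative inverse, $A^{-1}\ge0$, and then exhibit the required positive solution. The case $n=1$ is immediate. For the step, partition \[A=\begin{pmatrix}B & b\\ c^{\mathsf T} & a_{nn}\end{pmatrix},\] where $B=A_{n-1}$ is a $Z$-matrix whose leading minors $D_1,\dots,D_{n-1}$ are positive, $b\le0$, and $c^{\mathsf T}\le0$. By the induction hypothesis $B^{-1}\ge0$, and the Schur complement equals $s=a_{nn}-c^{\mathsf T}B^{-1}b=D_n/D_{n-1}>0$. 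Feeding these signs into the block-inverse formula for $A^{-1}$ shows, block by block, that every entry is nonnegative (using $B^{-1}b\le0$, $c^{\mathsf T}B^{-1}\le0$, $s>0$, and that the outer product $(B^{-1}b)(c^{\mathsf T}B^{-1})\ge0$), so $A^{-1}\ge0$. Finally, taking $X=A^{-1}\mathbf 1$ with $\mathbf 1=(1,\dots,1)^{\mathsf T}$ gives $AX=\mathbf 1>0$, and each coordinate $X_i=\sum_j(A^{-1})_{ij}$ is strictly positive because the $i$-th row of the nonsingular matrix $A^{-1}$ is nonnegative and not identically zero. This produces the positive solution demanded by (i).

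I expect the genuine obstacle to be pinning down the \emph{sign} of the determinant (equivalently, the nonnegativity of $A^{-1}$), not mere nonsingularity: diagonal dominance alone, via Theorem B.1, only rules out a vanishing determinant. The two directions resolve this differently --- through the eigenvalue/conjugate-pair argument built on Theorem B.2 in one direction, and through the inductive Schur-complement sign bookkeeping in the other --- and care is needed to verify that the $Z$-matrix structure together with the positivity of the truncated solution (resp.\ of all leading minors) really does propagate to the relevant submatrices at each stage. A minor point to handle cleanly is upgrading $X=A^{-1}\mathbf 1\ge0$ to strict positivity, which follows from the nonsingularity of $A^{-1}$ once its nonnegativity is known.
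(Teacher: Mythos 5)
Your proof is correct, but note that the paper itself does not prove this statement: Theorem B.3 is quoted as a known result, with a citation to the original Hawkins--Simon (1949) article, and is used as a black box in the proof of Lemma C.1. So there is no in-paper argument to compare against; what you have supplied is a self-contained proof, and it checks out. The truncation observation (that a positive solution of $AX>0$ restricts to every leading principal submatrix because the discarded terms $\sum_{j>p}a_{kj}x_j$ are nonpositive) is exactly the right lemma to make (i) hereditary. Your (i)$\Rightarrow$(ii) route is nonstandard but valid: reading $AX>0$, $X>0$ as weighted diagonal dominance with weights $d_j=x_j$ matches the paper's definition of d.d.\ verbatim (including the deduction $a_{ii}>0$, which is needed before applying McKenzie to $-A$), and the conjugate-pair argument correctly upgrades ``all eigenvalues of $A_p$ have positive real part'' to $\det A_p>0$. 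This direction buys economy within the paper's framework, since it reuses Theorem B.2 rather than redoing Gaussian elimination as in the classical Hawkins--Simon proof; the cost is that it invokes spectral machinery for what is elementarily a sign-bookkeeping fact. Your (ii)$\Rightarrow$(i) direction is the standard M-matrix induction via the Schur complement $s=D_n/D_{n-1}>0$, and it actually proves more than the theorem asserts, namely $A^{-1}\ge 0$; the final upgrade of $X=A^{-1}\mathbf{1}\ge 0$ to $X>0$ via the no-zero-row property of a nonsingular nonnegative matrix is handled correctly. One small point of care you implicitly got right: the induction hypothesis is applied to $B=A_{n-1}$, whose leading minors are precisely $D_1,\dots,D_{n-1}$, so the hypothesis (ii) genuinely propagates.
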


\section{ Proof of the results}\label{app}

\begin{lem}
The matrix $J(x)$ is a d.d. matrix.
\end{lem}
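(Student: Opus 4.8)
The plan is to verify the two structural properties that McKenzie's theorem requires of $J(x)$ at the equilibrium: a strictly negative diagonal together with diagonal dominance. The first is immediate, because on the admissible domain $I=[0,1]^{2K}$ every coordinate of $x$ is non-negative, so the diagonal entries $-\lambda_i x_{K+i}-\widetilde{\gamma}_i$ and $-\lambda_i x_i-\gamma_i$ are strictly negative, while every off-diagonal entry (the uniform $-\widetilde{\gamma}_{ui}$ inside the block $A_{11}$, and the entries $-\lambda_i x_i$, $-\lambda_i x_{K+i}$ in the diagonal cross-blocks $A_{12}$, $A_{21}$) is non-positive. Hence $J(x)$ is a $Z$-matrix with negative diagonal, and the definition of diagonal dominance is equivalent to the existence of a positive vector $d$ with $Md>0$, where $M$ is the comparison matrix obtained from $J(x)$ by replacing each diagonal entry by its absolute value and keeping the (already non-positive) off-diagonal entries. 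Since $M$ is again a $Z$-matrix, the Hawkins--Simon theorem lets me trade this solvability question for the purely determinantal condition that \emph{every} leading principal minor of $M$ be positive.

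I would then compute these minors by exploiting the block structure: $A_{12}$, $A_{21}$, $A_{22}$ are all diagonal, and the only inter-asset coupling sits in $A_{11}$ through the row-constant entries $-\widetilde{\gamma}_{ui}$. Ordering the variables by asset as $(x_1,x_{K+1},x_2,x_{K+2},\dots)$ and clearing the cross-block $A_{21}$ by elementary row operations (equivalently, taking the Schur complement with respect to the invertible diagonal block $M_{22}$), the even-order leading minors collapse into a product of the $2\times 2$ determinants
\[
\det\begin{pmatrix}\lambda_i x_{K+i}+\widetilde{\gamma}_i & -\lambda_i x_i\\[2pt] -\lambda_i x_{K+i} & \lambda_i x_i+\gamma_i\end{pmatrix}=\gamma_i\lambda_i x_{K+i}+\widetilde{\gamma}_i\lambda_i x_i+\widetilde{\gamma}_i\gamma_i,
\]
each of which is exactly the second Hurwitz determinant of the characteristic polynomial of the decoupled single-asset subsystem and is manifestly positive, times a scalar coupling factor of the form $1-\sum_i\frac{\widetilde{\gamma}_{ui}}{e_i+\widetilde{\gamma}_{ui}}$ produced by the rank-one perturbation $-g\mathbf{1}^{\top}$ that the $-\widetilde{\gamma}_{ui}$ entries contribute, where $e_i>0$ denotes the corresponding Schur-complement diagonal.

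The remaining and genuinely delicate step is to show that each coupling factor is positive, i.e. that $\sum_i\frac{\widetilde{\gamma}_{ui}}{e_i+\widetilde{\gamma}_{ui}}<1$ on every leading block. This is where the hypothesis that $x=x^{\ast}$ solves $f(x^{\ast})=0$ is indispensable: the steady-state equations give the identities $x_{K+i}^{\ast}=\gamma_{di}m_i/(\lambda_i x_i^{\ast}+\gamma_i)$ and $\widetilde{\gamma}_{ui}\,\mu^{\ast}(l,n)=x_i^{\ast}(\lambda_i x_{K+i}^{\ast}+\widetilde{\gamma}_{di})$, which recast each summand in the reduced form $x_i^{\ast}/(\mu^{\ast}(l,n)+2x_i^{\ast})$, thereby tying the coupling factor to the adding-up constraint $\sum_i x_i^{\ast}+\mu^{\ast}(l,n)=1-m$. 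I expect this inequality to be the crux of the argument: unlike the $2\times 2$ Hurwitz determinants, whose positivity is automatic, the sign of the coupling factor is not free and must be extracted from the steady-state identities together with the conservation constraint. Once positivity of all leading principal minors of $M$ is secured, Hawkins--Simon yields the desired positive vector $d$, so $J(x^{\ast})$ is diagonally dominant with negative diagonal; McKenzie's theorem then delivers that all eigenvalues of $J(x^{\ast})$ have negative real part, which is precisely what the asymptotic-stability conclusion of Theorem~3.1 requires.
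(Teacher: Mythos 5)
Your overall strategy coincides with the paper's: negative diagonal and non-positive off-diagonal entries, Hawkins--Simon to convert ``there exists $d>0$ with $Md>0$'' into positivity of leading principal minors, elimination of the $K$ owner rows (your Schur complement with respect to the diagonal block is exactly the paper's choice of $d_{K+i}$ just above $d_i\lambda_ix_i/(\lambda_ix_i+\gamma_i)$), and the observation that the surviving $K\times K$ block is a positive diagonal minus a rank-one coupling. Your per-asset factor $\gamma_i\lambda_ix_{K+i}+\widetilde{\gamma}_i\lambda_ix_i+\widetilde{\gamma}_i\gamma_i$ is correct and agrees, after the steady-state substitution $x_{K+i}=\gamma_{di}m_i/(\lambda_ix_i+\gamma_i)$, with the paper's diagonal entry $b_{ii}=\widetilde{\gamma}_i+\lambda_i\gamma_ix_{K+i}^2/(\gamma_{di}m_i)$.

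The gap is that the one step you yourself call the crux --- positivity of the coupling factor $1-\sum_i\widetilde{\gamma}_{ui}/(e_i+\widetilde{\gamma}_{ui})$ --- is never established, and the route you sketch does not close it. First, the reduction of the summand to $x_i^{\ast}/(\mu^{\ast}(l,n)+2x_i^{\ast})$ does not follow from the two identities you cite: one computes $e_i+\widetilde{\gamma}_{ui}=2\widetilde{\gamma}_{ui}+\widetilde{\gamma}_{di}+\lambda_i\gamma_ix_{K+i}^2/(\gamma_{di}m_i)$, and rewriting $2\widetilde{\gamma}_{ui}$ via $\widetilde{\gamma}_{ui}\mu^{\ast}(l,n)=x_i^{\ast}(\lambda_ix_{K+i}^{\ast}+\widetilde{\gamma}_{di})$ produces a term $\lambda_ix_{K+i}^{\ast}$ that does not cancel against the quadratic term $\lambda_i\gamma_ix_{K+i}^2/(\gamma_{di}m_i)=\lambda_ix_{K+i}^{\ast}\cdot\gamma_i/(\lambda_ix_i^{\ast}+\gamma_i)$; the claimed simplification is false. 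Second, even granting that form, the constraint $\sum_ix_i^{\ast}+\mu^{\ast}(l,n)=1-m$ does not force $\sum_ix_i^{\ast}/(\mu^{\ast}(l,n)+2x_i^{\ast})<1$: each summand is only bounded by $1/2$, so for $K\geq 3$ with $\mu^{\ast}(l,n)$ small the sum can exceed $1$. As written, your argument therefore establishes the lemma only for $K\leq 2$. You should also reconcile your sign with the paper's: the paper evaluates $\det(\tilde B_{pp})$ with the coupling factor appearing as $1+\sum_j(1+b_{jj}/\widetilde{\gamma}_{uj})^{-1}$, a plus sign that makes positivity immediate, whereas your rank-one determinant computation yields the minus sign, under which positivity of $\sum_j\widetilde{\gamma}_{uj}/(b_{jj}+\widetilde{\gamma}_{uj})<1$ is a genuine claim requiring proof. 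Until that inequality is either proved for all $K$ or the sign discrepancy is resolved, the proposal is incomplete.
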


\begin{proof}
We recall that the matrix $J(x)$ can be expressed as the block matrix
\[ J(x)= \left[ \begin{array}{c|c} A_{11} & A_{12} \\ \hline A_{21} & A_{22} \end{array} \right] \], where
$$A_{11}=\left(
\begin{array}{ccccc}
-\lambda_{1}x_{K+1}-\widetilde{\gamma}_{1} & -\widetilde{\gamma}_{u1}   &     &\cdots&  -\widetilde{\gamma}_{u1} \\
-\widetilde{\gamma}_{u2} & -\lambda_{2}x_{K+2}-\widetilde{\gamma}_{2}   &  &\vdots& \\
 &\ddots&\ddots&\ddots&  \\
\vdots&  &  -\widetilde{\gamma}_{u_{K-1}}  &     &   -\widetilde{\gamma}_{u_{K-1}} \\
 -\widetilde{\gamma}_{u_{K}} &  \cdots&  &   -\widetilde{\gamma}_{u_{K}}  &  -\lambda_{K}x_{2K}-\widetilde{\gamma}_{K}
\end{array}
\right)
$$

$$A_{12}=\left(
\begin{array}{ccccc}
-\lambda_{1}x_{1} & 0   &     &\cdots&  0 \\
0 & -\lambda_{2}x_{2}   &  &\vdots& \\
 &\ddots&\ddots&\ddots&  \\
\vdots&  &  0  &     &   0 \\
 0 &  \cdots&  &   0  &  -\lambda_{K}x_{K}
\end{array}
\right)
$$

$$A_{21}=\left(
\begin{array}{ccccc}
-\lambda_{1}x_{K+1} & 0   &     &\cdots&  0 \\
0 & -\lambda_{2}x_{K+2}   &  &\vdots& \\
 &\ddots&\ddots&\ddots&  \\
\vdots&  &  0  &     &   0 \\
 0 &  \cdots&  &   0  &  -\lambda_{K}x_{2K}
\end{array}
\right)
$$
and

$$A_{22}=\left(
\begin{array}{ccccc}
-\lambda_{1}x_{1}-\gamma_{1} & 0   &     &\cdots&  0 \\
0 & -\lambda_{2}x_{2}-\gamma_{2}   &  &\vdots& \\
 &\ddots&\ddots&\ddots&  \\
\vdots&  &  0  &     &   0 \\
 0 &  \cdots&  &   0  &  -\lambda_{K}x_{K}-\gamma_{1}
\end{array}
\right)
$$

  Note that the matrix $J(x)$ satisfies the condition $|a_{ii}|>\sum \limits_{\underset{j \neq i}{j=1}}^n |a_{ij}|$ for $i\in \{K,K+1,\cdots,2K\}$.
We need to find a sequence of real numbers $(d_{i})_{1\leq i\leq n}$, with $d_{i}> 0$ for each $i$ such that $DJ$ becomes a stricly d.d.  matrix, where $D$ is the diagonal matrix given by $$D=\left(
\begin{array}{ccccc}
d_{1} & 0   &     &\cdots&  0 \\
0 & d_{2}   &  &\vdots& \\
 &\ddots&\ddots&\ddots&  \\
\vdots&  &  0  &     &   0 \\
 0 &  \cdots&  &   0  &  d_{2K}
\end{array}
\right)
.$$ The matrices $A_{12},A_{21}$ and $A_{22}$ are diagonal matrices with negative
entries.

The matrix $DJ$ is strictly d.d. if we have

\[ (E)
\left \{
\begin{array}{c @{>} c}
    d_{i}(\lambda_{i}x_{K+i}+\widetilde{\gamma}_{i}) & \sum \limits_{\underset{j \neq i}{j=1}}^n d_{j}\widetilde{\gamma}_{ui}+d_{K+i}\lambda_{i}x_{K+i} \\
    d_{K+i}(\lambda_{i}x_{i}+\gamma_{i}) & d_{i}\lambda_{i}x_{i} \\
\end{array}
\right.
\]

Which can be rewritten:

\[ \left \{
\begin{array}{c @{>} c}
    d_{i}(\lambda_{i}x_{K+i}+\widetilde{\gamma}_{i}) & \sum \limits_{\underset{j \neq i}{j=1}}^n d_{j}\widetilde{\gamma}_{ui}+d_{K+i}\lambda_{i}x_{K+i} \\
    d_{K+i} &\frac{d_{i}\lambda_{i}x_{i}}{\lambda_{i}x_{i}+\gamma_{i}}  \\
\end{array}
\right.\]

 Thus the system of $2K$ inequalities with $2K$ unknown variables is reduced to the following system of $K$ inequalities with $K$ variables
 $$(\star) \; d_{i}(\lambda_{i}x_{K+i}+\widetilde{\gamma}_{i}) > \sum \limits_{\underset{j \neq i}{j=1}}^n d_{j}\widetilde{\gamma}_{ui}+
    \frac{d_{i}\lambda_{i}^{2}x_{i}x_{K+i}}{\lambda_{i}x_{i}+\gamma_{i}}.$$ Indeed, when we get the positive $d_i$ for $i=1,...,K$ we let $d_{K+i}$ be such that $0<d_{K+i}-\frac{d_{i}\lambda_{i}x_{i}}{\lambda_{i}x_{i}+\gamma_{i}}<\frac{\epsilon}{2\lambda_{i}x_{i}}$ where $\epsilon =  d_{i}(\lambda_{i}x_{K+i}+\widetilde{\gamma}_{i}) - \sum \limits_{\underset{j \neq i}{j=1}}^n d_{j}\widetilde{\gamma}_{ui}-
        \frac{d_{i}\lambda_{i}^{2}x_{i}x_{K+i}}{\lambda_{i}x_{i}+\gamma_{i}}.$

Then we get $
    d_{i}\left(\lambda_{i}x_{K+i}+\widetilde{\gamma}_{i}-\frac{\lambda_{i}^{2}x_{i}x_{K+i}}{\lambda_{i}x_{i}+\gamma_{i}}\right) > \sum \limits_{\underset{j \neq i}{j=1}}^n d_{j}\widetilde{\gamma}_{ui}.$

    Because $x$ is the steady state, we get from (11) that $x_{K+i}=\frac{\gamma_{di}m_{i}}{\lambda_{i}x_{i}+\gamma_{i}},$ and we have

\begin{eqnarray*}\lambda_{i}x_{K+i}+\widetilde{\gamma}_{i}-\frac{\lambda_{i}^{2}x_{i}x_{K+i}}{\lambda_{i}x_{i}+\gamma_{i}}
&=&\lambda_{i}x_{K+i}+\widetilde{\gamma}_{i}-\frac{\lambda_{i}^{2}x_{i}x_{K+i}^{2}}{\gamma_{di}m_{i}}
\\
&=&\lambda_{i}x_{K+i}+\widetilde{\gamma}_{i}-\frac{\lambda_{i}^{2}x_{K+i}^{2}}{\gamma_{di}m_{i}}\left(\frac{\gamma_{di}m_{i}}{\lambda_{i}x_{K+i}}
-\frac{\gamma_{i}}{\lambda_{i}}\right)\\
&=&\widetilde{\gamma}_{i}+\frac{\lambda_{i}\gamma_{i}}{\gamma_{di}m_{i}}x_{K+i}^{2}.\end{eqnarray*}
We define the $K \times K$ matrix $B=(b_{ij})$, where $b_{ii}=\widetilde{\gamma}_{i}+\frac{\lambda_{i}\gamma_{i}}{\gamma_{di}m_{i}}x_{K+i}^{2}$ and $b_{ij}=-\widetilde{\gamma}_{i}$ for $j \neq i$.  Clearly $b_{ii}$ is a positive quantity for all $i.$
The system of inequality $(\star)$ is equivalent to the system
$(\star \star)$ $d_{i}a_{ii} > \sum \limits_{\underset{j \neq i}{j=1}}^n d_{j}\widetilde{\gamma}_{ui}.$
If $D_K$ is a diagonal matrix with diagonal entries $d_1,d_2,...,d_K$, we may write it as $D_K=diag(d_1,d_2,...,d_K)$, then we want $BD_K> 0.$ From Theorem Appendix B.3 (Hawkins-Simons), the inequality $BD_K>0$, with $B$ having non positive off-diagonal elements has a positive solution if and only if
$det\left(
\begin{array}{ccc}
   b_{11} & \cdots & b_{1p} \\
   \vdots & \ddots & \vdots \\
   b_{p1} & \cdots & b_{pp}
\end{array}
\right)
> 0$ for all $p=1,2,\cdots,K$.

Let $\tilde{B}_{pp}=\left(
\begin{array}{ccc}
   b_{11} & \cdots & b_{1p} \\
   \vdots & \ddots & \vdots \\
   b_{p1} & \cdots & b_{pp}
\end{array}
\right)$
We will now compute the determinant of the matrix $\tilde{B}_{pp}$.
$det(\tilde{B}_{pp})=\frac{1}{\prod \limits_{\underset{}{j=1}}^p \widetilde{\gamma}_{ui}}\left|
\begin{array}{ccccc}
\frac{b_{11}}{\widetilde{\gamma}_{u1}} & -1   &     &\cdots&  -1 \\
-1 & \frac{b_{22}}{\widetilde{\gamma}_{u2}}   &  &\vdots& \\
 &\ddots&\ddots&\ddots&  \\
\vdots&  &  -1  &     &   -1 \\
 -1 &  \cdots&  &   -1  &  \frac{b_{pp}}{\widetilde{\gamma}_{up}}
\end{array}
\right|$

\begin{eqnarray*}det(\tilde{B}_{pp})
&=&\frac{1}{\prod \limits_{\underset{}{j=1}}^p \widetilde{\gamma}_{uj}}\prod \limits_{\underset{}{j=1}}^p \left (1+\frac{b_{jj}}{\widetilde{\gamma}_{uj}}\right) \left(1-\sum \limits_{\underset{}{j=1}}^p\frac{(-1)}{1+\frac{b_{jj}}{\widetilde{\gamma}_{uj}}}\right)
\\&=&\prod \limits_{\underset{}{j=1}}^p \left(\frac{1+\frac{b_{jj}}{\widetilde{\gamma}_{uj}}}{\widetilde{\gamma}_{uj}}\right) \left(1+\sum \limits_{\underset{}{j=1}}^p\frac{1}{1+\frac{b_{jj}}{\widetilde{\gamma}_{uj}}}\right),\end{eqnarray*}

The second member of the product in the first line above is the Hurwitz determinant. We do have that $det(\tilde{B}_{pp})>0.$
Since $det(\tilde{B}_{pp})>0$ and $B$ has non positive off-diagonal elements, then there exists a positive vector $(d_1,d_2,...,d_K)$ in $\mathbb{R}^{K}$ such that $B (d_1,d_2,...,d_K)>0.$ Hence there exists a diagonal matrix
$D=diag(d_{1},...,d_{2k})$ such that $DA$ is a strictly d.d. matrix.

Finally, from Theorem 3.1, the steady state of a partially segmented OTC market model is asymptotically stable if the Jacobian matrix evaluated at the steady state is such that all its eigenvalues have negative real parts. We have just shown that the matrix $J(x)$ is d.d. and its diagonal is negative. An appeal to McKenzie's Theorem (Appendix B.2) gives us the desired conclusion that all eigenvalues of $J(x)$ have negative real parts, hence $x$ is stable. This completes the proof of our main result.

\end{proof}

\begin{proof} \,\textbf{Theorem 4.1}

Since we know that $P_i = (1-q)\Delta^l_i + q\Delta^h_i$, it is sufficient to compute each term of this equality.

We recall the following notations: $V(hi,n)=x_{i}$, $V(li,o)=y_{i}$, $V(hi,o)=z_{i}$, $V(l,n)=w$, $\lambda_i\mu(li,0)=a_{i},$
 $\widetilde{\gamma}_{di}+r+\lambda_i\mu(li,0)=b_{i}$, $\gamma_{ui}+r+\lambda_i\mu(hi,n)=b_{i},$ $\lambda_i\mu(hi,n)=d_{i}$ and $r_{i}=\gamma_{ui}+r$. Using these notations, we have $P_i = (1-q)(y_{i}-w) + q(z_{i}-x_{i}).$

From equality $(15)$ and the expression of $w$ we have

\begin{equation}\label{eq:price}
x_i=\frac{\sum_{i\in\mathcal{I}}\widetilde{\gamma}_{ui}\Omega(i,r)}{r +\sum_{i\in\mathcal{I}}\widetilde{\gamma}_{ui}(1-\Lambda(i,r))}\Lambda(i,r)+ \Omega(i,r)
 \end{equation}
and

\begin{align}
\begin{split}
y_{i} &= \left( \frac{r_{i}}{(1-q)a_{i}\Gamma(i,r)}\right) \frac{\sum_{i\in\mathcal{I}}\widetilde{\gamma}_{ui}\Omega(i,r)}{r +\sum_{i\in\mathcal{I}}\widetilde{\gamma}_{ui}(1-\Lambda(i,r))}+
\frac{{\delta}_{hi}}{r}\\
&\qquad -\Omega(i,r)\left(1+\frac{\widetilde{\gamma}_{di}}{(1-q)a_i}\right)\frac{r_{i}}{r}. \end{split}\label{eq:steadyMuhin}
\end{align}
We also have

\begin{align}
\begin{split}
 \Delta^l_i &= \left( \frac{r_{i}}{(1-q)a_{i}\Gamma(i,r)}-1\right) \frac{\sum_{i\in\mathcal{I}}\widetilde{\gamma}_{ui}\Omega(i,r)}{r +\sum_{i\in\mathcal{I}}\widetilde{\gamma}_{ui}(1-\Lambda(i,r))}+
\frac{{\delta}_{hi}}{r}\\
&\qquad -\Omega(i,r)\left(1+\frac{\widetilde{\gamma}_{di}}{(1-q)a_i}\right)\frac{r_{i}}{r} \end{split}\label{eq:steadyMuhin}
\end{align}

and

\begin{align}
\begin{split}
\Delta^h_i &= \left( \frac{{\gamma}_{di}}{(1-q)a_{i}\Gamma(i,r)}-\Lambda(i,r)\right) \frac{\sum_{i\in\mathcal{I}}\widetilde{\gamma}_{ui}\Omega(i,r)}{r +\sum_{i\in\mathcal{I}}\widetilde{\gamma}_{ui}(1-\Lambda(i,r))}+
\frac{{\delta}_{hi}}{r}\\
&\qquad - \Omega(i,r)\left(1+\frac{\widetilde{\gamma}_{di}}{(1-q)a_i}\right)\frac{\gamma_{di}}{r}-\Omega(i,r). \end{split}\label{eq:steadyMuhin}
\end{align}

A straightforward calculation and the relation $\Gamma(i,r)=\Psi(i,r)+\frac{r}{(1-q)a_{i}}$ give the result.

\end{proof}

\begin{proof} \,\textbf{Proposition 6.1}

We recall that the prices are given by
\begin{align}
\begin{split}
P_{i} &= \left( \frac{{\gamma}_{di}}{(1-q)a_{i}\Gamma(i,r)}-\Lambda(i,r)\right) \frac{\sum_{i\in\mathcal{I}}\widetilde{\gamma}_{ui}\Omega(i,r)}{r +\sum_{i\in\mathcal{I}}\widetilde{\gamma}_{ui}(1-\Lambda(i,r))}+
\frac{{\delta}_{hi}}{r}\\
&\qquad -q\Omega(i,r)\left(1+(1-r+\frac{{\gamma}_{di}}{q})(1+\frac{\widetilde{\gamma}_{di}}{(1-q)a_i})\right), \ \forall i\in\mathcal{I}.\end{split}
\end{align}
 As before set $\Psi(i,r)=(1+\frac{\gamma_i+r}{qd_i})(1+\frac{\widetilde{\gamma}_{di}}{(1-q)a_i})-1$,  $\Gamma(i,r)=(1+\frac{\gamma_i+r}{qd_i})(1+\frac{\widetilde{\gamma}_{di}+r}{(1-q)a_i})-1$, $\Lambda(i,r)=\frac{\Psi(i,r)}{\Gamma(i,r)}$, $\Omega(i,r)=\frac{{\delta}_{di}}{qd_{i}\Gamma(i,r)}$ and $\Upsilon(i,r)=q\Omega(i,r)\left(1+(1-r+\frac{{\gamma}_{di}}{q})(1+\frac{\widetilde{\gamma}_{di}}{(1-q)a_i})\right)$.\\
 We have $\displaystyle\lim_{\gamma_{ui}\to \infty}\Gamma(i,r)=\infty$ and
 $\displaystyle\lim_{\gamma_{ui}\to \infty}\Omega(i,r)=\displaystyle\lim_{\gamma_{ui}\to \infty}\frac{\delta_{di}}{qd_{i}\Gamma(i,r)}=0$. Moreover we have
\begin{eqnarray*}\displaystyle\lim_{\gamma_{ui}\to \infty}\Lambda(i,r)&=&\frac{1+\frac{\widetilde{\gamma}_{di}}{(1-q)a_i}}{1+\frac{\widetilde{\gamma}_{di}+r}{(1-q)a_i}}\\&=
&\frac{(1-q)\lambda_{i}\mu_(li,o)+\widetilde{\gamma}_{di}}{(1-q)\lambda_{i}\mu_(li,o)+\widetilde{\gamma}_{di}+r}.
\end{eqnarray*}
These expressions imply that $\displaystyle\lim_{\gamma_{ui}\to \infty}\frac{\sum_{i\in\mathcal{I}}\widetilde{\gamma}_{ui}\Omega(i,r)}{r +\sum_{i\in\mathcal{I}}\widetilde{\gamma}_{ui}(1-\Lambda(i,r))}=0$. Since the expression $1+(1-r+\frac{{\gamma}_{di}}{q})(1+\frac{\widetilde{\gamma}_{di}}{(1-q)a_i})$ is not a function of $\gamma_{ui}$, the above calculations and the algebra of limits allow us to conclude that \newline $\displaystyle\lim_{\gamma_{ui}\to \infty}P_{i}=\frac{{\delta}_{hi}}{r}$.\\\\

 Similarly, we have $\displaystyle\lim_{\gamma_{di}\to \infty}\Gamma(i,r)=\infty$ and $\displaystyle\lim_{\gamma_{ui}\to \infty}\Omega(i,r)=0.$
 Moreover we have
\begin{eqnarray*}\displaystyle\lim_{\gamma_{di}\to \infty}\Lambda(i,r)&=&\frac{\frac{1}{qd_{i}}(1+\frac{\widetilde{\gamma}_{di}}{(1-q)a_i})}{\frac{1}{qd_{i}}(1+\frac{\widetilde{\gamma}_{di}+r}
{(1-q)a_i})}\\&=
&\frac{(1-q)\lambda_{i}\mu_(li,o)+\widetilde{\gamma}_{di}}{(1-q)\lambda_{i}\mu_(li,o)+\widetilde{\gamma}_{di}+r}.
\end{eqnarray*}
Previous calculations imply that $\displaystyle\lim_{\gamma_{di}\to \infty}\frac{\sum_{i\in\mathcal{I}}\widetilde{\gamma}_{ui}\Omega(i,r)}{r +\sum_{i\in\mathcal{I}}\widetilde{\gamma}_{ui}(1-\Lambda(i,r))}=0$. Furthermore, we have
\begin{eqnarray*}\displaystyle\lim_{\gamma_{di}\to \infty}\Upsilon(i,r)&=&\lim_{\gamma_{di}\to \infty} q\Omega(i,r)\left(1+(1-r+\frac{{\gamma}_{di}}{q})(1+\frac{\widetilde{\gamma}_{di}}{(1-q)a_i})\right)\\&=
&\displaystyle\lim_{\gamma_{di}\to \infty} \frac{\delta_{di}\left(1+(1-r+\frac{{\gamma}_{di}}{q})(1+\frac{\widetilde{\gamma}_{di}}{(1-q)a_i})\right)}{d_{i}\left((1+\frac{\gamma_i+r}{qd_i})
(1+\frac{\widetilde{\gamma}_{di}+r}{(1-q)a_i})-1\right)}\\&=
&\frac{\delta_{di}\left((1-q)\lambda_{i}\mu_(li,o)+\widetilde{\gamma}_{di}\right)}{(1-q)\lambda_{i}\mu_(li,o)+\widetilde{\gamma}_{di}+r}.
\end{eqnarray*} Hence the limit of $P_i$ when $\gamma_{di}$ tends to infinity is $$\displaystyle\lim_{\gamma_{di}\to \infty}P_i=\frac{{\delta}_{hi}}{r}-\frac{\delta_{di}\left((1-q)\lambda_{i}\mu_(li,o)+\widetilde{\gamma}_{di}\right)}{(1-q)\lambda_{i}
\mu_(li,o)+\widetilde{\gamma}_{di}+r}$$\\\\

Here, if we assume that $\widetilde{\gamma}_{di}$ runs to the infinity, we have $\displaystyle\lim_{\gamma_{di}\to \infty}\Gamma(i,r)=\infty$ and $\displaystyle\lim_{\gamma_{ui}\to \infty}\Omega(i,r)=0.$ In addition, we have
\begin{eqnarray*}\displaystyle\lim_{\gamma_{di}\to \infty}\Lambda(i,r)&=&\frac{\frac{1}{(1-q)a_{i}}(1+\frac{\widetilde{\gamma}_{ai}}{qd_i})}{\frac{1}{(1-q)a_{i}}(1+\frac{\widetilde{\gamma}_{di}+r}
{qa_i})}\\&=
&1.
\end{eqnarray*}

\begin{eqnarray*}\displaystyle\lim_{\gamma_{di}\to \infty}\Upsilon(i,r)&=&\lim_{\gamma_{di}\to \infty} q\Omega(i,r)\left(1+(1-r+\frac{{\gamma}_{di}}{q})(1+\frac{\widetilde{\gamma}_{di}}{(1-q)a_i})\right)\\&=
&\displaystyle\lim_{\gamma_{di}\to \infty} \frac{\delta_{di}\left(1+(1-r+\frac{{\gamma}_{di}}{q})(1+\frac{\widetilde{\gamma}_{di}}{(1-q)a_i})\right)}{d_{i}\left((1+\frac{\gamma_i+r}{qd_i})
(1+\frac{\widetilde{\gamma}_{di}+r}{(1-q)a_i})-1\right)}\\&=
&\frac{\delta_{di}\left(q(1-r)+\gamma_{di}\right)}{q\lambda_{i}\mu_(hi,n)+ \gamma_{i}+r}.
\end{eqnarray*}

 Thus the limit of $P_i$ when $\widetilde{\gamma}_{di}$ tends to infinity is $$\displaystyle\lim_{\gamma_{di}\to \infty}P_i=\frac{{\delta}_{hi}}{r}-\frac{\delta_{di}\left(q(1-r)+\gamma_{di}\right)}{q\lambda_{i}\mu_(hi,n)+ \gamma_{i}+r}$$\\\\

 We investigate now what happens when $\widetilde{\gamma}_{ui}$ tends to infinity. Because the parameter $\widetilde{\gamma}_{ui}$ appears just in the expression $\Theta(i,r)=\frac{\sum_{i\in\mathcal{I}}\widetilde{\gamma}_{ui}\Omega(i,r)}{r +\sum_{i\in\mathcal{I}}\widetilde{\gamma}_{ui}(1-\Lambda(i,r))}$, it is sufficient to show that the limit of $\Theta(i,r)$ when $\widetilde{\gamma}_{ui}$ tends to infinity does not exist.
  Assume in the first case that $\widetilde{\gamma}_{ui}=\widetilde{\gamma}_{uj}$ for all $i,j$ in $\{1,2,...,K\}$, then

\begin{eqnarray*}\displaystyle\lim_{\widetilde{\gamma}_{ui}\to \infty}\Theta(i,r)&=&\lim_{\widetilde{\gamma}_{ui}\to \infty}\frac{\sum_{i\in\mathcal{I}}\widetilde{\gamma}_{ui}\Omega(i,r)}{r +\sum_{i\in\mathcal{I}}\widetilde{\gamma}_{ui}(1-\Lambda(i,r))}\\&=
&\frac{\sum_{i\in\mathcal{I}}\Omega(i,r)}{r +\sum_{i\in\mathcal{I}}(1-\Lambda(i,r))}.
\end{eqnarray*}
We assume in the second case that $\widetilde{\gamma}_{ui}=i\widetilde{\gamma}_{u1}$ for all $i$ in $\{1,2,...,K\}$, then
\begin{eqnarray*}\displaystyle\lim_{\widetilde{\gamma}_{ui}\to \infty}\Theta(i,r)&=&\lim_{\widetilde{\gamma}_{ui}\to \infty}\frac{\sum_{i\in\mathcal{I}}\widetilde{\gamma}_{ui}\Omega(i,r)}{r +\sum_{i\in\mathcal{I}}\widetilde{\gamma}_{ui}(1-\Lambda(i,r))}\\&=
&\frac{\sum_{i\in\mathcal{I}}i\Omega(i,r)}{r +\sum_{i\in\mathcal{I}}i(1-\Lambda(i,r))}.
\end{eqnarray*}
Since the two expressions $\frac{\sum_{i\in\mathcal{I}}\Omega(i,r)}{r +\sum_{i\in\mathcal{I}}(1-\Lambda(i,r))}$ and
$\frac{\sum_{i\in\mathcal{I}}i\Omega(i,r)}{r +\sum_{i\in\mathcal{I}}i(1-\Lambda(i,r))}$ are not equal, this implies that the price $P_i$ has two different limits; this is a contradiction in virtue of the uniqueness of the limit. Hence the limit $\displaystyle\lim_{\widetilde{\gamma}_{ui}\to \infty}P_{i}$ does not exist.\\\\

Now we want to compute the limit: $\displaystyle\lim_{\lambda_{i}\to \infty}P_i$.  We have
$\displaystyle\lim_{\lambda_{i}\to \infty}\Gamma(i,r)=0$ and $\displaystyle\lim_{\lambda_{i}\to \infty}\Psi(i,r)=0$.

 Furthermore, we have \begin{eqnarray*}\displaystyle\lim_{\lambda_{i}\to \infty}\Lambda(i,r)&=&\lim_{\gamma_{di}\to \infty}\frac{(1+\frac{\gamma_i+r}{qd_i})(1+\frac{\widetilde{\gamma}_{di}}{(1-q)a_i})-1}{(1+\frac{\gamma_i+r}{qd_i})
(1+\frac{\widetilde{\gamma}_{di}+r}{(1-q)a_i})-1}\\&=
&\frac{\frac{\gamma_i+r}{q\lambda_i\mu_(hi,n)}+\frac{\widetilde{\gamma}_{di}}{(1-q)\lambda_i\mu_(li,o)}}{\frac{\gamma_i+r}
{q\lambda_i\mu_(hi,n)}+\frac{\widetilde{\gamma}_{di}+r}{(1-q)\lambda_i\mu_(li,o)}}\\&=
&\frac{(1-q)(\gamma_{i}+r)\mu_(li,o)+q\widetilde{\gamma}_{di}\mu_(hi,n)}{(1-q)(\gamma_{i}+r)
\mu_(li,o)+q(\widetilde{\gamma}_{di}+r)\mu_(hi,n)}.
\end{eqnarray*}

And we deduce the expression \begin{eqnarray*}\displaystyle\lim_{\lambda_{i}\to \infty}(1-\Lambda(i,r))&=&1-\frac{(1-q)(\gamma_{i}+r)\mu_(li,o)+q\widetilde{\gamma}_{di}\mu_(hi,n)}{(1-q)(\gamma_{i}+r)
\mu_(li,o)+q(\widetilde{\gamma}_{di}+r)\mu_(hi,n)}\\&=&\frac{qr\mu_(hi,n)}{(1-q)(\gamma_{i}+r)
\mu_(li,o)+q(\widetilde{\gamma}_{di}+r)\mu_(hi,n)}.
\end{eqnarray*}

We compute also these following limits, \begin{eqnarray*}\displaystyle\lim_{\lambda_{i}\to \infty}\Omega(i,r)&=&\displaystyle\lim_{\lambda_{i}\to \infty} \frac{{\delta}_{di}}{qd_{i}\Gamma(i,r)}\\&=&\frac{qr\mu_(hi,n)}{(1-q)(\gamma_{i}+r)
\mu_(li,o)+q(\widetilde{\gamma}_{di}+r)\mu_(hi,n)};
\end{eqnarray*}

\begin{eqnarray*}\displaystyle\lim_{\lambda_{i}\to \infty}\frac{{\gamma}_{di}}{(1-q)a_{i}\Gamma(i,r)}&=&\frac{q\gamma_{di}\mu_(hi,n)}{(1-q)\delta_{di}\mu_(li,o}
\widehat{\Omega}(i,r)\\&=&\frac{q\gamma_{di}\mu_(hi,n)}{(1-q)(\gamma_{i}+r)
\mu_(li,o)+q(\widetilde{\gamma}_{di}+r)\mu_(hi,n)}
\end{eqnarray*}

and

$$\displaystyle\lim_{\lambda_{i}\to \infty}\left( \frac{{\gamma}_{di}}{(1-q)a_{i}\Gamma(i,r)}-\Lambda(i,r)\right)=\frac{q\gamma_{di}\mu_(hi,n)-(1-q)(\gamma_{i}+r)
\mu_(li,o)-q\widetilde{\gamma}_{di}\mu_(hi,n)}{(1-q)(\gamma_{i}+r)
\mu_(li,o)+q(\widetilde{\gamma}_{di}+r)\mu_(hi,n)}.$$
Replacing each limit computed above by its value allows us to obtain the following expression of prices when $\lambda_{i}\to \infty$:

 \begin{align}
\begin{split}
P_{i} &= \left( \frac{q\gamma_{di}\mu_(hi,n)}{(1-q)\delta_{di}\mu_(li,o}\widehat{\Omega}(i,r)-\widehat{\Lambda}(i,r)\right) \frac{\sum_{i\in\mathcal{I}}\widetilde{\gamma}_{ui}\widehat{\Omega}(i,r)}{r +\sum_{i\in\mathcal{I}}\widetilde{\gamma}_{ui}(1-\widehat{\Lambda}(i,r))}+
\frac{{\delta}_{hi}}{r}\\
&\qquad -\left( \frac{{\delta}_{di}(2-r+\frac{{\gamma}_{di}}{q})}{\frac{q}{1-q}\frac{\mu_(hi,n)}{\mu_(li,o}(\widetilde{\gamma}_{di}+r)+ \gamma_i+r}\right), \ \forall i\in\mathcal{I}.\end{split}
\end{align}

\end{proof}

\begin{proof} \,\textbf{Proposition 6.2}
The prices $P_i$ as functions of $\lambda_{j}$, with $j\neq i$ are given by \begin{align}
\begin{split}
P_{i}(\lambda_{j}) &= \left( \frac{{\gamma}_{di}}{(1-q)a_{i}\Gamma(i,r)}-\Lambda(i,r)\right) \frac{\sum_{i\in\mathcal{I}}\widetilde{\gamma}_{ui}\Omega(i,r)}{r +\sum_{i\in\mathcal{I}}\widetilde{\gamma}_{ui}(1-\Lambda(i,r))}+
\frac{{\delta}_{hi}}{r}\\
&\qquad -q\Omega(i,r)\left(1+(1-r+\frac{{\gamma}_{di}}{q})(1+\frac{\widetilde{\gamma}_{di}}{(1-q)a_i})\right), \ \forall i\neq j\end{split}
.\end{align}

We need to write $P_{i}(\lambda_{j})$ as an explicit function of $\lambda_{j}$ and after that, we will compute the derivatives to show whether $P_{i}(\lambda_{j})$ is increasing or decreasing.

Let us recall that $\Theta(\lambda_{1},...,\lambda_{K})=\frac{\sum_{i\in\mathcal{I}}\widetilde{\gamma}_{ui}\Omega(i,r)}{r +\sum_{i\in\mathcal{I}}\widetilde{\gamma}_{ui}(1-\Lambda(i,r))}.$ The only expression of $P_i$ where an occurrence of $\lambda_{j}$ appears is $\Theta(\lambda_{j}),$ the others expressions do not depend on $\lambda_{j}.$
We must first compute explicitly the expressions $\Omega(i,r)$ and $1-\Lambda(i,r).$
\begin{eqnarray*}\Omega(i,r)&=&\frac{\delta_{di}}{qa_{i}\Gamma(i,r)
}\\&=
&\frac{\delta_{di}\lambda_{i}}{q\mu_(hi,n)\left( \frac{\gamma_i+r}
{q\mu_(hi,n)}\lambda_i+\frac{\widetilde{\gamma}_{di}+r}{(1-q)\mu_(li,o)}\lambda_i+\frac{\gamma_i+r}
{q\mu_(hi,n)}\frac{\widetilde{\gamma}_{di}+r}{(1-q)\mu_t(li,o)}\right)}\\&=
&\frac{\delta_{di}\lambda_{i}}{(\gamma_{i}+r+\frac{\widetilde{\gamma}_{di}+r}{(1-q)\mu_(li,o)}q\mu_(hi,n))\lambda_{i}
+(\gamma_{i}+r)\frac{\widetilde{\gamma}_{di}+r}{(1-q)\mu_(li,o)}}
\end{eqnarray*}
and
\begin{eqnarray*}\Lambda(i,r)&=&\frac{(1+\frac{\gamma_i+r}{qd_i})(1+\frac{\widetilde{\gamma}_{di}}{(1-q)a_i})-1}{(1+\frac{\gamma_i+r}{qd_i})
(1+\frac{\widetilde{\gamma}_{di}+r}{(1-q)a_i})-1}\\&=
&\frac{\frac{\gamma_i+r}
{q\mu_(hi,n)}\lambda_i+\frac{\widetilde{\gamma}_{di}}{(1-q)\mu_(li,o)}\lambda_i+\frac{\gamma_i+r}
{q\mu_(hi,n)}\frac{\widetilde{\gamma}_{di}}{(1-q)\mu_(li,o)}}{\frac{\gamma_i+r}
{q\mu_(hi,n)}\lambda_i+\frac{\widetilde{\gamma}_{di}+r}{(1-q)\mu_(li,o)}\lambda_i+\frac{\gamma_i+r}
{q\mu_(hi,n)}\frac{\widetilde{\gamma}_{di}+r}{(1-q)\mu_(li,o)}}.
\end{eqnarray*}

We can deduce the expression \begin{eqnarray*}1-\Lambda(i,r)&=&1-\frac{\frac{\gamma_i+r}
{q\mu_(hi,n)}\lambda_i+\frac{\widetilde{\gamma}_{di}}{(1-q)\mu_(li,o)}\lambda_i+\frac{\gamma_i+r}
{q\mu_(hi,n)}\frac{\widetilde{\gamma}_{di}}{(1-q)\mu_(li,o)}}{\frac{\gamma_i+r}
{q\mu_(hi,n)}\lambda_i+\frac{\widetilde{\gamma}_{di}+r}{(1-q)\mu_(li,o)}\lambda_i+\frac{\gamma_i+r}
{q\mu_(hi,n)}\frac{\widetilde{\gamma}_{di}+r}{(1-q)\mu_(li,o)}}\\&=&\frac{r(q\lambda_{i}\mu_(hi,n)+\gamma_i+r)}
{(1-q)\mu_(li,o)\left((\gamma_{i}+r+\frac{\widetilde{\gamma}_{di}+r}{(1-q)\mu_(li,o)})\lambda_{i}
+(\gamma_{i}+r)\frac{\widetilde{\gamma}_{di}+r}{(1-q)\mu_(li,o)}\right)}.
\end{eqnarray*}

By setting $a=\gamma_{i}+r+\frac{\widetilde{\gamma}_{di}+r}{(1-q)\mu_(li,o)}q\mu_(hi,n)
$, $b=(\gamma_{i}+r)\frac{\widetilde{\gamma}_{di}+r}{(1-q)\mu_(li,o)}$,\\ $b_1=\frac{rq\mu_(hi,n)}{(1-q)\mu_(li,o)}$
and $b_0=\frac{r(\gamma_{i}+r)}{(1-q)\mu_(li,o)}$ we get $\Omega(i,r)=\frac{\delta_{di}\lambda_i}{a\lambda_{i}+b}$ and\\ $1-\Lambda(i,r)=\frac{b_{1}\lambda_{i}+b_{0}}{a\lambda_{i}+b}.$

We have
\begin{eqnarray*}\Theta(\lambda_{1},...,\lambda_{K})&=&\frac{\sum_{j\in\mathcal{I}}\widetilde{\gamma}_{uj}\Omega(j,r)}{r +\sum_{j\in\mathcal{I}}\widetilde{\gamma}_{uj}(1-\Lambda(j,r))}\\&=&\frac{\frac{\delta_{di}\lambda_i}{a\lambda_{i}+b}+\sum\limits_{\substack{j \neq i}} \widetilde{\gamma}_{uj}\Omega(j,r)}
{\frac{b_{1}\lambda_{i}+b_{0}}{a\lambda_{i}+b}+\sum\limits_{\substack{j \neq i}} \widetilde{\gamma}_{uj}(1-\Lambda(j,r))}\\&=&\frac{[\widetilde{\gamma}_{ui}\delta_{di}+a\sum\limits_{\substack{j \neq i}} \widetilde{\gamma}_{uj}\Omega(j,r)]\lambda_{i}+b\sum\limits_{\substack{j \neq i}} \widetilde{\gamma}_{uj}\Omega(j,r)}
{[b_{1}+ar+a\sum\limits_{\substack{j \neq i}} \widetilde{\gamma}_{uj}(1-\Lambda(j,r))]\lambda_{i}+b_{0}+b(r+\sum\limits_{\substack{j \neq i}} \widetilde{\gamma}_{uj}(1-\Lambda(j,r)))}.
\end{eqnarray*}
The partial derivative of $\Theta(\lambda_{1},...,\lambda_{K})$ in the direction of $\lambda_{i}$ is

$$\frac{\partial \Theta(\lambda_{1},...,\lambda_{K})}{\partial\lambda_{i}}=\frac{\widetilde{\gamma}_{ui}\delta_{di}[b_0+br+b\sum\limits_{\substack{j \neq i}} \widetilde{\gamma}_{uj}(1-\Lambda(j,r))]+(ab_{0}-b_{1}b)\sum\limits_{\substack{j \neq i}} \widetilde{\gamma}_{uj}\Omega(j,r)}
{\left([b_{1}+ar+a\sum\limits_{\substack{j \neq i}} \widetilde{\gamma}_{uj}(1-\Lambda(j,r))]\lambda_{i}+b_{0}+b(r+\sum\limits_{\substack{j \neq i}} \widetilde{\gamma}_{uj}(1-\Lambda(j,r)))\right)^{2}}.$$
Let us show that the quantity $ab_{0}-b_{1}b$ is positive, that is $ab_{0}-b_{1}b> 0.$
Replacing each of $a,b,b_0,b_1$ by its value lead us to

\begin{eqnarray*}ab_{0}-b_{1}b&=&\frac{r(\gamma_i+r)^{2}}
{(1-q)\mu_(li,o)}+\frac{r(\gamma_i+r)(\widetilde{\gamma}_{di}+r)}
{(1-q)^{2}\mu_(li,o)^{2}}q\mu_(hi,n)-\frac{r(\gamma_i+r)(\widetilde{\gamma}_{di}+r)}
{(1-q)^{2}\mu_(li,o)^{2}}q\mu_(hi,n)\\&=&\frac{r(\gamma_i+r)^{2}}
{(1-q)\mu_(li,o)}> 0.
\end{eqnarray*}

 Hence $\frac{\partial \Theta(\lambda_{1},...,\lambda_{K})}{\partial\lambda_{i}}>0.$
The partial derivative $\frac{\partial P_{j}(\lambda_{i})}{\partial\lambda_{i}}=A_{j}\frac{\partial \Theta(\lambda_{1},...,\lambda_{K})}{\partial\lambda_{i}}$, were
$A_{j}=\frac{{\gamma}_{dj}}{(1-q)a_{j}\Gamma(j,r)}-\Lambda(j,r)$ determines the variation of the price. Its sign is entirely determined by the sign of $A_j.$

Now let us investigate to the sign of $A_j.$
We have $A_j>0$ is equivalent to $\frac{{\gamma}_{dj}}{(1-q)\lambda_{j}\mu_(lj,o)}-\Psi(j,r)>0.$ By replacing $\Psi(j,r)$ by its value, a direct calculation gives us $\frac{(\gamma_i+r)(\widetilde{\gamma}_{dj}+r)}
{\lambda_{j}q\mu_(hj,n)}< \gamma_{dj}-\widetilde{\gamma}_{dj}-\frac{(\gamma_j+r)}
{q\mu_(hj,n)}(1-q)\mu_(lj,o).$

$(1).$ If $\gamma_{dj}-\widetilde{\gamma}_{dj}-\frac{(\gamma_j+r)}
{q\mu_(hj,n)}(1-q)\mu_(lj,o)\leq 0,$ then $A_j>0$. Hence the price $P_{j}(\lambda_{i})$ is decreasing.\newline

 Assume that $\gamma_{dj}-\widetilde{\gamma}_{dj}-\frac{(\gamma_j+r)}
{q\mu_(hj,n)}(1-q)\mu_(lj,o)>0$,
We set\newline $\widehat{\lambda}_{j}=\frac{(\gamma_j+r)(\widetilde{\gamma}_{dj}+r)}{(\gamma_{dj}-\widetilde{\gamma}_{dj})q\mu_(hj,n)-(\gamma_j+r)
(1-q)\mu_(lj,o)}$.\newline

$(2).$ For $\lambda_{j}=\widehat{\lambda}_{j}$, the price $P_{j}(\lambda_{i})$ does not depend on the value $\lambda_{i}$, it is a constant function in $\lambda_{i}$.\newline

$(2).$ For $\lambda_{j}> \widehat{\lambda}_{j}$, the price $P_{j}(\lambda_{i})$ is increasing
and for $\lambda_{j}< \widehat{\lambda}_{j}$, the price $P_{j}(\lambda_{i})$ is decreasing. This completes the proof.

\end{proof}

\end{document}